\newcommand{\<}{\langle}
\renewcommand{\>}{\rangle}
\newcommand{\Sc}{\mathcal{S}}
\newcommand{\F}{\mathcal{F}}
\newcommand{\B}{\mathcal{B}}
\newcommand{\C}{\mathcal{C}}
\newcommand{\Hc}{\mathcal{H}}
\newcommand{\M}{\mathcal{M}}
\newcommand{\hM}{\widehat{M}}
\newcommand{\mR}{\mathbb{R}}
\newcommand{\vp}{\vec{p}}
\newcommand{\vx}{\vec{x}}
\newcommand{\vy}{\vec{y}}
\newcommand{\al}{\alpha}
\newcommand{\ga}{\gamma}
\newcommand{\ep}{\epsilon}
\newcommand{\la}{\lambda}
\newcommand{\w}{\omega}
\newcommand{\W}{\Omega}
\newcommand{\vep}{\varepsilon}
\newcommand{\vph}{\varphi}
\newcommand{\dsp}{\displaystyle}
\newcommand{\ti}{\widetilde}
\newcommand{\ov}{\overline}
\newcommand{\wh}{\widehat}
\newcommand{\wch}{\widecheck}
\newcommand{\p}{\partial}
\newcommand{\con}{\mathrm{const}}
\DeclareMathOperator{\supp}{supp}
\newtheorem{thm}{Theorem}
\newtheorem{pr}[thm]{Proposition}
\newtheorem{lem}[thm]{Lemma}
\newtheorem{col}[thm]{Corollary}
\newtheorem{dfn}{Definition}
\newtheorem{rem}[thm]{Remark}
\renewcommand\@biblabel[1]{#1.}
\title{On energy-momentum transfer of quantum fields}
\author{Andrzej Herdegen\thanks{e-mail: herdegen@th.if.uj.edu.pl}\\
{\it Institute of Physics, Jagiellonian University,}\\
{\it Reymonta 4, 30-059 Kraków, Poland}}
\date{}
\begin{document}

\maketitle

\begin{abstract}
We prove the following theorem on bounded operators in quantum field theory: if $\|[B,B^*(x)]\|\leq \con D(x)$, then $\|B^k_\pm(\nu)G(P^0)\|^2\leq\con\int D(x-y)d|\nu|(x)d|\nu|(y)$, where $D(x)$ is a~function weakly decaying in spacelike directions, $B^k_\pm$ are creation/annihilation parts of an appropriate time derivative of $B$, $G$ is any positive, bounded, non-increasing function in $L^2(\mR)$, and $\nu$ is any finite complex Borel measure; creation/annihilation operators may be also replaced by $B^k_t$ with $\wch{B^k_t}(p)=|p|^k\wch{B}(p)$. We also use the notion of energy-momentum scaling degree of $B$ with respect to a~submanifold (Steinmann-type, but in momentum space, and applied to the norm of an operator). These two tools are applied to the analysis of singularities of $\wch{B}(p)G(P^0)$. We prove, among others, the following statement (modulo some more specific assumptions): outside $p=0$ the only allowed contributions to this functional which are concentrated on  a~submanifold (including the trivial one -- a~single point) are Dirac measures on hypersurfaces (if the decay of $D$ is not to slow).

\vspace{1ex}
\noindent
keywords: quantum field theory, translation automorphism group, spectral properties

\vspace{1ex}
\noindent
MSC2010: 46L40, 81T05

\end{abstract}

\section{Introduction}\label{int}

One of the most fundamental features of the relativistic quantum field theory in the Minkowski spacetime is the transformation of observables, or more generally `quantum fields', by an  automorphism group representing spacetime translations. This action is always assumed to be implemented by the action of a~continuous representation of translations by unitaries in a~Hilbert space. While the spectral properties of unitary representations are under very good control, especially in case of local observables and vacuum representation (see~\cite{bo96}, \cite{ha92}), the problem of spectral properties of automorphisms themselves is far less worked out (but see a~recent analysis~\cite{dy10}).

In this article we give some contributions to this topic. The main tool is Theorem~\ref{GB}, which uses a~more technical Theorem~\ref{Bk}, both proved in Section~\ref{bound}. It states that under very weak assumptions on the decay of commutators in spacelike distances a~split into creation/annihilation parts of an appropriate (generalized) time derivative of an operator is possible. It also gives bounds on norms of these operators, when composed with a~weakly decreasing function of energy operator. The use of this result with respect to several time-axes gives then similar bounds on the operator without annihilation/creation splitting, and with a~spectral condition needed only in a~neighborhood of zero four-momentum transfer. The proof of Theorem~\ref{Bk} uses a~well-known lemma by Buchholz, which we write down here for the convenience of the reader.
\begin{lem}[\cite{bu90}]\label{bulem}
Let $C$ be a~bounded operator and $P$ the orthogonal projection operator onto the kernel space of $C^n$. Then:
\begin{equation*}
 \|CP\|^2\leq(n-1)\|[C,C^*]\|\,,\quad \|C^*P\|^2\leq n\|[C,C^*]\|\,.
\end{equation*}
\end{lem}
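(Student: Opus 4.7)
The plan is to prove both inequalities jointly by induction on $n$, strengthening the claim to the vector statement that for every unit $\psi\in\ker C^n$ one has $\|C\psi\|^2\leq(n-1)\|K\|$ and $\|C^*\psi\|^2\leq n\|K\|$, with $K:=[C,C^*]$. Passing to the supremum over such $\psi$ then converts this to the asserted operator bounds, since $P\Hc=\ker C^n$.

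The base case $n=1$ is immediate: $C\psi=0$ makes the first bound trivial, while the canonical rewriting $\|C^*\psi\|^2=\<\psi,CC^*\psi\>=\<\psi,C^*C\psi\>+\<\psi,K\psi\>=\<\psi,K\psi\>\leq\|K\|$ gives the second. For the inductive step, I would exploit the crucial observation that $C$ maps $\ker C^n$ into $\ker C^{n-1}$, since $C^n\psi=0$ entails $C^{n-1}(C\psi)=0$. Applying the inductive hypothesis for $C^*$ at level $n-1$ to the normalized vector $C\psi/\|C\psi\|$ (the case $C\psi=0$ being trivial) yields $\|C^*C\psi\|^2\leq(n-1)\|K\|\,\|C\psi\|^2$, after which Cauchy--Schwarz gives $\|C\psi\|^2=\<\psi,C^*C\psi\>\leq\|C^*C\psi\|\leq\sqrt{(n-1)\|K\|}\,\|C\psi\|$, whence $\|C\psi\|^2\leq(n-1)\|K\|$. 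The companion bound then drops out of the same algebraic identity used in the base case, $\|C^*\psi\|^2-\|C\psi\|^2=\<\psi,K\psi\>\leq\|K\|$.

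The only real subtlety is that the two inequalities must be carried through the induction in tandem: the bound on $\|C\psi\|$ at step $n$ is obtained via the bound on $\|C^*\cdot\|$ at step $n-1$, so attempting to handle either inequality in isolation would not close the recursion. Beyond this bookkeeping point, everything reduces to elementary Hilbert-space manipulation once the nesting $\ker C^{n-1}\subset\ker C^n$ and its compatibility with the action of $C$ are in hand.
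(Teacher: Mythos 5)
Your proof is correct: the induction closes exactly as you describe, since the $C^*$-bound at level $n-1$ applied to $C\psi/\|C\psi\|\in\ker C^{n-1}$ gives $\|C^*C\psi\|\leq\sqrt{(n-1)\|K\|}\,\|C\psi\|$, Cauchy--Schwarz then yields the $C$-bound at level $n$, and the identity $\|C^*\psi\|^2=\|C\psi\|^2+\<\psi,K\psi\>$ upgrades it to the $C^*$-bound. The paper itself gives no proof (it quotes the lemma from Buchholz's article for the reader's convenience), and your argument is essentially Buchholz's original induction, carried out at the level of vectors in $\ker C^n$ rather than with the projections $P_n$ themselves; the only cosmetic remark is that one could run the induction on the $C^*$-inequality alone, obtaining the $C$-inequality as a byproduct at each step.
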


The precise law of decay of commutators mentioned above, on which our analysis is based, is formulated below in Definition~\ref{com} in Section~\ref{bound}. This law takes the form of a~`$\kappa$-type condition' depending on a~decay-rate parameter $\kappa>0$. The law is trivially satisfied for local observables, but it also admits a~wide class of rather strongly nonlocal fields -- their nonlocality is not referred to local observables as is the case in the usual definitions of `quasi-locality'~\cite{ar67} or `almost-locality' (see e.g.\ \cite{ha92}). The precise form of the commutator bounding function $D_\kappa(a)$ may be seen as the least restricting power-law generalizing locality: (i) for $3$-space translation this reduces to $\kappa$-power decay, (ii) the law is Lorentz-covariant, and (iii) for two double-cones of fixed sizes centered at $0$ and $a$ respectively, it is precisely the value of $|\vec{a}|-|a^0|$ which decides whether the regions are spacelike separated.

Our motivation for going outside the paradigm of locality is at least twofold. From a~formal point of view it is worth understanding how far quantum field properties depend on the strict locality assumption. The results of the present paper indicate that at least some of the expected and desirable properties do safely without it. On the other hand, on the physical side, strict locality may prove to be too restrictive in theories with constraints, such as electrodynamics: we note that the charge and infrared structure of QED is still far from being completely understood. We are among the authors believing that proper inclusion of the long-range structure into QED demands the introduction of nonlocal observables. The need for nonlocality may be even more justified in case of non-observable fields, such as e.g.\ the gauge potential in physical gauges. See~\cite{he98} for a~(nonlocal) model of asymptotic fields in QED; see also~\cite{he12} for a~more recent argument for nonlocality and some bibliographic remarks. However, we stress once more that our present results do not depend on our more specific motivation and include, in particular, local case.

In Section~\ref{applic} we give some applications of Theorem~\ref{GB} to the analysis of the energy-momentum transfer of operators of the assumed type. One of the intermediate results is an extension and strengthening of the $3$-momentum spectral analysis of local operators by Buchholz~\cite{bu90}. Another tool used for the analysis is a~scaling degree of the Steinmann type~\cite{st71}, but applied to norms of operators, with scaling in momentum space. We shall say more on the motivation and results of this section below, after introducing our notation and recalling some basic facts. Appendices contain a~few lemmas which are needed in the main text, but which may also be of a~more general technical interest. The implications of Theorem~\ref{GB} will also be used (in a~future publication~\cite{du14}) for the extension to the case of massless fields of the methods used in~\cite{he13} for the analysis of scattering and particle structure in quantum field theory.

We denote by $\M$ the affine Minkowski space built on the Minkowski vector space $M$; the origin $O$ in $\M$ is fixed. The unit, future-pointing vector of a~chosen time-axis is denoted by $t$; $\vx$ is the $3$-space part of the vector $x$, and $|x|^2=|x^0|^2+|\vx|^2$. The `momentum space' isometric with $M$ will be denoted by $\hM$. Throughout the article $\la$ is a~fixed parameter of the physical dimension of length.

We shall use the following conventions and notation for Fourier transforms
\begin{equation*}
\begin{gathered}
 \ti{f}(\w)=(\F_1f)(\w)=\frac{1}{2\pi}\int e^{i\w t} f(t)dt\,,\\
 (\F_3{g})(\vp)=\frac{1}{2\pi}\int e^{-i\vp\cdot\vx} g(\vx)d^3x\,,\\
 \wh{\chi}(p)=(\F{\chi})(p)=\frac{1}{(2\pi)^2}\int e^{ip\cdot x} \chi(x)dx\,,\quad
 \wch{\vph}=\F^{-1}\vph\,.
\end{gathered}
\end{equation*}

We assume that there is a~continuous unitary representation of translations $U(x)=\exp(ix\cdot P)$ acting in a~Hilbert space $\Hc$, with the spectrum contained in $\ov{V_+}$, the closure of the future lightcone. However, we do \emph{not} assume the existence of a~vacuum vector. We denote by $P(\W)$ the projection onto the subspace with spectral values of $P^0$ in $\W$; in particular, we write $P(E)=P((-\infty,E\>)$, whereby $P(E)=0$ for $E<0$.

For each bounded operator $B$ acting in $\Hc$ and an integrable function $\chi$ on $M$ one denotes
\begin{equation*}
B(x)=U(x)BU(-x)\,,\quad B(\chi)=\int B(x)\chi(x)\,dx\,.
\end{equation*}
We extend the second definition to the case when $\nu$ is a~finite complex Borel measure on $M$, and denote
\begin{equation*}
 B(\nu)=\int B(x)\, d\nu(x)=\wch{B}(\wh{\nu})\,,
\end{equation*}
where
\begin{equation*}
 \wh{\nu}(p)=\frac{1}{(2\pi)^2}\int e^{ip\cdot x}\,d\nu(x)
\end{equation*}
is the Fourier transform of the measure. Recall that for each such $\nu$ its variation $|\nu|$ is a~finite positive measure and $|\int\chi d\nu|\leq\int|\chi|d|\nu|$. We note also that
\begin{equation*}
 B(\nu)^*=(B^*)(\bar{\nu})\,.
\end{equation*}
We shall write $B\in\C^n$ (resp.\ $B\in\C_t^n$) if all derivatives $D^\al B(x)$ with $|\al|\leq n$ (resp. $\p_0^lB(x^0)$ with $l\leq n$, in the Minkowski basis in which $t$ is the timelike basis vector) exist and are continuous in the norm sense. All these classes (including $\C^\infty$ and $\C^\infty_t$) are  weakly dense in $\B(\Hc)$.

If the spectral energy-momentum content of a~vector $\psi\in\Hc$ is contained in $\Delta_\psi$ and the support of $\wh{\nu}$ is in $\Delta_{\wh{\nu}}$, then the spectral content of $B(\nu)\psi$ is contained in $\Delta_\psi+\Delta_{\wh{\nu}}$. Therefore, $\wch{B}(p)$ is interpreted as the spectral component of $B$ transferring energy-momentum $p$~\cite{ha92}. After recalling this, we can now more fully discuss the background and announce the results of Section~\ref{applic}.

The most interesting aspect of the spectral properties of translations automorphism is the analysis of singularities, which arise when Fourier transform of $B(x)$ has non-integrable components. This problem is much more involved than the spectral theory of unitary translation group for two main reasons: (i) lack of orthogonality (in contrast to the case of Hilbert space) and issuing difficulties in decomposing the spectral objects, and (ii) the nature of $\wch{B}(p)$ is distributional, in contrast to the more specific situation in Hilbert space, where spectral objects are Borel measures. The first problem was recently addressed by Dybalski in~\cite{dy10}, where also a~general definition of continuous and absolutely continuous spectrum was proposed and a~number of decomposition theorems were proved. Most of the applications to QFT discussed in the article either assume a~vacuum vector, which allows the lifting of the unitary spectral properties to some spectral properties of automorphisms (e.g., as one wold expect, the mass hyperboloid of the massive particle theory is included in the singular continuous spectrum of translation automorphisms), or concentrate on the $3$-space (without time) restriction of the group. The analysis of the latter is of central importance for the derivation of the so called asymptotic functionals, which play central role in the discussion of particle content of the theory according to ideas of Araki and Haag~\cite{ar67} and Buchholz~\cite{bu91},~\cite{bu94}. In its most far-reaching interpretation, this approach aims at replacing Wigner concept of a~particle, which in its standard form is not applicable to objects such as infraparticles, by some decompositions of asymptotic functionals into plane-wave type objects~\cite{po04}.

However, whereas the ideas of Araki, Haag and Buchholz may be a~useful tool for discovering the manifestations of particle-type structures in a~theory, the question whether there are some more fundamental structural properties of the theory which are revealed in this way is still valid; at least for the Wigner particles this is the case. We propagate the view that in absence of vacuum, one should look for mass-hyperboloid-like structures in the energy momentum transfer of fields. We believe that at least some infraparticle-like structures may have such form; this has been tested in a~model of asymptotic electrodynamics~\cite{he13}.

Therefore, the distributional structure of the full space-time Fourier transform $\wch{B}(p)$ is what we are most interested in (and not so much, at least at this stage, in spectral decomposition). And as singularities of $\wch{B}(p)$ could, \emph{a priori}, be sharper than measures (as mentioned in (ii) above), a~general question is: how sharp they can be? Section~\ref{applic} gives some answers to this question. In particular, Proposition~\ref{homsurf} says  that if the decay of the commutator function $D_\kappa$ is not to slow ($\kappa>2$), then (in slightly lose terms) the only possible distributional energy-momentum transfer concentrated on a~submanifold in $\hM$ is represented by a~Dirac delta on a~hypersurface (a few other possibilities occur for slower decay). In the case of mass hyperboloid $p^2=m^2$, $p^0>0$, such possibility in absence of vacuum was discussed in~\cite{he13}.

\section{Estimates}\label{bound}

Let the function $f$ on $\mR$ be smooth and bounded together with all its derivatives. For all real $k>0$ the distributions $(t\pm i0)^{-k-1}$ may be convoluted with such functions and give then functions of the same type. We shall denote
\begin{equation}\label{ftk}
 f^k_\pm(\tau)=\int T_\pm^k(\tau-s)f(s)\,ds\,,\quad T^k_\pm(s)=\mp ie^{\mp ik\pi}\frac{\Gamma(k+1)}{2\pi(s\mp i0)^{k+1}}\,.
\end{equation}
Their distributional Fourier transforms are then
\begin{equation}\label{fwk}
 \ti{f^k_\pm}(\w)=e^{\mp ik\pi/2}\theta(\pm\w)\,|\w|^k\,\ti{f}(\w)\,.
\end{equation}
In particular, for $k=n=1,2,\ldots$ the $n$-th derivative of $f$ is
\begin{equation*}
 f^{(n)}=f^n_++f^n_-\,.
\end{equation*}

For $B\in C_t^\infty$ we shall denote by $B^k_\pm$ the smooth bounded operators obtained by `reflected' convolution \eqref{ftk} applied in time variable,
\begin{equation}
 B^k_\pm(\tau,\vec{0})=\int B(s,\vec{0})T_\pm^k(s-\tau)\,ds\,,
\end{equation}
so that for the translated operator $B^k_\pm(x)$ smeared with a~test function $\vph$ on $M$ there is
\begin{equation*}
 B^k_\pm(\vph)=B(\vph^k_\pm)=\wch{B^k_\pm}(\wh{\vph})=\wch{B}(\wh{\vph^k_\pm})
\end{equation*}
 and
\begin{equation}
 \wch{B^k_\pm}(p)=e^{\mp ik\pi/2}\theta(\pm p^0)\,|p^0|^k\,\wch{B}(p)\,,\quad k>0\,.
\end{equation}

\begin{dfn}\label{com}
We shall say that the commutator $[B_1,B_2]$ of bounded operators \mbox{$B_1,B_2$} is of $\kappa$-type, $\kappa>0$, if the following bound is satisfied:
\begin{equation}\label{fcom}
\begin{aligned}
 &\|[B_1,B_2(a)]\|\leq c D_{\kappa}(a)\,,\\
 &D_{\kappa}(a)\equiv \begin{cases}1 & a^2\geq0\\
                          \dfrac{\la^\kappa}{(\la+|\vec{a}|-|a^0|)^{\kappa}} & a^2<0\,.
             \end{cases}
\end{aligned}
\end{equation}
with some constant $c$ depending on $B_i$. The assumption is covariant: if the bound holds in any particular reference system, it is valid in all other, with some other constants $c$.

We shall say that $[B_1,B_2]$ is of $\kappa^\infty$-type (resp.\ $\kappa_t^\infty$-type), if \mbox{$B_i\in\C^\infty$} (resp.\ $B_i\in\C_t^\infty$) and all $[D^{\al_1}B_1,D^{\al_2}B_2]$  (resp.\ all $[\p_0^{n_1}B_1,\p_0^{n_2}B_2]$) are of $\kappa$-type.
\end{dfn}

In the following lemma we write $B(f)=\int B(x^0,\vec{0})f(x^0)dx^0$.
\begin{lem}\label{stability}
 Let $[B_1,B_2]$ be of $\kappa$-type.\\
 (i) If $|\chi(x)|\leq\con\,(\la+|x|)^{-r}$, $r\geq 4+\kappa$, then $[B_1,B_2(\chi)]$ is also of $\kappa$-type.\\
 (ii) If $|f(\tau)|\leq\con\,(\la+|\tau|)^{-s}$, $s\geq 1+\kappa$, then $[B_1,B_2(f)]$ is also of $\kappa$-type.
\end{lem}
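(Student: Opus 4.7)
The plan is to reduce both statements to integral bounds on the kernel $D_\kappa$. Translation covariance gives
\begin{equation*}
[B_1,B_2(\chi)(a)]=\int[B_1,B_2(y+a)]\,\chi(y)\,dy\,,
\end{equation*}
and analogously $[B_1,B_2(f)(a)]=\int[B_1,B_2(a^0+\tau,\vec{a})]\,f(\tau)\,d\tau$ for part~(ii). Applying the $\kappa$-type assumption pointwise reduces the problem to proving
\begin{equation*}
\int D_\kappa(y+a)\,|\chi(y)|\,dy\leq\con\cdot D_\kappa(a)
\end{equation*}
in part~(i), and the obvious one-dimensional analogue in part~(ii).

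The timelike case $a^2\geq 0$ is immediate since $D_\kappa(a)=1$ and the weights $|\chi|$, $|f|$ are integrable under the decay hypotheses (which force $r>4$, $s>1$). For $a^2<0$ set $\rho(a)=|\vec{a}|-|a^0|>0$, so that $D_\kappa(a)=\la^\kappa/(\la+\rho(a))^\kappa$, and split the $y$-integral at $|y|=\rho(a)/4$. On $|y|<\rho(a)/4$, the triangle inequalities $|\vec{y}+\vec{a}|\geq|\vec{a}|-|\vec{y}|$ and $|y^0+a^0|\leq|a^0|+|y^0|$ combined with $|y^0|+|\vec{y}|\leq\sqrt{2}|y|$ yield $\rho(y+a)\geq\rho(a)-\sqrt{2}|y|\geq\rho(a)/2$, hence $D_\kappa(y+a)\leq 2^\kappa D_\kappa(a)$, and this region contributes at most $\con\cdot D_\kappa(a)\int|\chi|$. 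On $|y|\geq\rho(a)/4$, use $D_\kappa(y+a)\leq 1$ together with the geometric bound $(\la+|y|)^{-\kappa}\leq 4^\kappa(\la+\rho(a))^{-\kappa}$ to pull the factor $D_\kappa(a)$ out of the weight; the residual $(\la+|y|)^{-(r-\kappa)}$ is integrable on $\mR^4$ under the hypothesis $r\geq 4+\kappa$. Part~(ii) proceeds identically with the four-dimensional integration replaced by the time integration, the constant $\sqrt{2}$ replaced by $1$, and the integrability threshold $r-\kappa>4$ replaced by $s-\kappa>1$, matching $s\geq 1+\kappa$.

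The only step with real content is the geometric splitting argument and the observation that both regions yield the same factor $\la^\kappa/(\la+\rho(a))^\kappa$; everything else is bookkeeping. The main (modest) obstacle is to verify the sharpness of the hypotheses: one must confirm that after pulling $D_\kappa(a)$ out of the outer-region estimate the remaining integrals $\int_{\mR^4}(\la+|y|)^{-(r-\kappa)}\,d^4y$ and $\int_{\mR}(\la+|\tau|)^{-(s-\kappa)}\,d\tau$ are finite under the stated thresholds. The covariance clause of Definition~\ref{com} then ensures that the resulting $\kappa$-type bound for the smeared commutator holds in any reference frame.
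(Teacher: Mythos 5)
Your reduction to the kernel bound, the split of the integration region at $|y|=\rho/4$ with $\rho=|\vec{a}|-|a^0|$, and the inner-region estimate $\rho(y+a)\geq\rho/2$, hence $D_\kappa(y+a)\leq 2^\kappa D_\kappa(a)$, reproduce the paper's argument exactly, and the timelike case is handled the same way. The one step that does not hold as written is your outer-region bookkeeping. There you bound $D_\kappa(y+a)\leq1$, factor the weight as $(\la+|y|)^{-\kappa}(\la+|y|)^{-(r-\kappa)}$, pull out $(\la+|y|)^{-\kappa}\leq4^\kappa(\la+\rho)^{-\kappa}$, and then assert that the residual factor is integrable over $\mR^4$ ``under the hypothesis $r\geq4+\kappa$''. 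That assertion fails precisely at the endpoint the lemma allows: for $r=4+\kappa$ one has $r-\kappa=4$ and $\int_{\mR^4}(\la+|y|)^{-4}\,d^4y=\infty$ (the divergence is at infinity, so restricting to $|y|\geq\rho/4$ does not help); likewise in part (ii), $s=1+\kappa$ gives $\int_{\mR}(\la+|\tau|)^{-1}\,d\tau=\infty$. So your proof, as it stands, covers only $r>4+\kappa$ and $s>1+\kappa$.

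The gap is local and easily closed, and the repair is presumably what the paper's terse ``which is sufficient for the bound'' alludes to: do not split the weight pointwise, but integrate the full weight over the exterior region and exploit the lower cutoff. For $r>4$,
\begin{equation*}
\int_{|y|\geq\rho/4}(\la+|y|)^{-r}\,d^4y\leq\con\int_{\rho/4}^{\infty}(\la+u)^{3-r}\,du
\leq\con\,(\la+\rho/4)^{4-r}\leq\con\,(\la+\rho)^{4-r}\,,
\end{equation*}
and since $4-r+\kappa\leq0$ and $\la+\rho\geq\la$, one gets $(\la+\rho)^{4-r}\leq\la^{4-r+\kappa}(\la+\rho)^{-\kappa}\leq\con\,D_\kappa(a)$, which is the required bound including the case $r=4+\kappa$. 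The one-dimensional analogue $\int_{|\tau|\geq\rho/4}(\la+|\tau|)^{-s}\,d\tau\leq\con\,(\la+\rho)^{1-s}\leq\con\,D_\kappa(a)$ repairs part (ii) at $s=1+\kappa$. With this substitution your argument is complete and coincides with the paper's proof.
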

\begin{proof}
 (i) We have $\|[B_1,B_2(\chi)(x)]\|\leq \con\int D_\kappa(z+x)(\la+|z|)^{-r}dz$. We~split integration region into (a) $|z|\leq(|\vx|-|x^0|)/4$, and (b) the rest. In region~(b) we use $D\leq1$, which is sufficient for the bound. In region (a) there is $|\vx+\vec{z}|-|x^0+z^0|\geq(|\vx|-|x^0|)/2$, which again leads to the bound. The proof of (ii) is similar.
\end{proof}
In view of the above lemma, a~shift from $\kappa$-type to $\kappa^\infty$-type or $\kappa^\infty_t$-type may be easily achieved by smearing:
\begin{lem}
 Let $[A_1,A_2]$ be of $\kappa$-type. If $B_i=A_i(\chi_i)$, where all $D^\al\chi_i$ satisfy the assumptions of the last lemma, then $[B_1,B_2]$ is of $\kappa^\infty$-type. Similarly, if $C_i=A_i(f_i)$, where all $f_i^{(n)}$ satisfy the assumptions of the last lemma, then $[B_1,B_2]$ is of $\kappa^\infty_t$-type.
\end{lem}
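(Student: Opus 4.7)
The plan has two parts: (a) verify that $B_i\in\C^\infty$ and $C_i\in\C^\infty_t$, and (b) establish the $\kappa$-type bound for every pair of mixed derivatives. I expect the only substantive point is that Lemma~\ref{stability} is stated for smearing of the right-hand argument only, so the left-hand slot needs a short separate argument.

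For (a), I would start from the translation-covariant identity $B_i(y)=\int A_i(z)\chi_i(z-y)\,dz$ and differentiate under the integral sign. Since $\|A_i(z)\|=\|A_i\|$ is uniform in $z$, while each $D^\al\chi_i$ is dominated by the integrable function $\con\,(\la+|x|)^{-(4+\kappa)}$, dominated convergence yields
\[
 D^\al B_i=(-1)^{|\al|}\,A_i(D^\al\chi_i)
\]
together with the required norm-continuity in $y$. The analogous one-dimensional computation gives $\p_0^n C_i=(-1)^n\,A_i(f_i^{(n)})$ and hence $C_i\in\C^\infty_t$.

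For (b), it suffices to show that $[A_1(\phi_1),A_2(\phi_2)]$ is of $\kappa$-type whenever $\phi_1,\phi_2$ satisfy the decay hypothesis of Lemma~\ref{stability}(i). I would apply that lemma twice. A first application to the pair $[A_1,A_2]$ with smearing function $\phi_2$ gives $\|[A_1,A_2(\phi_2)(b)]\|\leq c'\,D_\kappa(b)$ for some constant $c'$. For the left-hand smearing I would write
\[
 [A_1(\phi_1),A_2(\phi_2)(a)]=\int U(z)\,[A_1,A_2(\phi_2)(a-z)]\,U(-z)\,\phi_1(z)\,dz\,,
\]
so that, by translation invariance of the operator norm,
\[
 \|[A_1(\phi_1),A_2(\phi_2)(a)]\|\leq c'\int D_\kappa(a-z)\,|\phi_1(z)|\,dz\,.
\]
The right-hand side is precisely the convolution bounded by $\con\,D_\kappa(a)$ inside the proof of Lemma~\ref{stability}(i), via the same region split $|z|\leq(|\vec{a}|-|a^0|)/4$ versus its complement. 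Specialising to $\phi_i=D^{\al_i}\chi_i$ yields the first assertion; the second follows identically, with Lemma~\ref{stability}(ii) in place of (i) and $\phi_i=f_i^{(n_i)}$.
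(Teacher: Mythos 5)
Your proof is correct and is essentially the argument the paper leaves implicit (``a shift from $\kappa$-type \dots may be easily achieved by smearing''): derivatives commute with the smearing, $D^\al B_i=(-1)^{|\al|}A_i(D^\al\chi_i)$, and Lemma~\ref{stability} then controls the smeared commutators. Your separate treatment of the left-hand slot via translation covariance is sound; note it can be shortened by observing that $D_\kappa$ is even, so $\|[B_2,B_1(a)]\|=\|U(a)[B_1,B_2(-a)]U(-a)\|\leq cD_\kappa(a)$ shows the $\kappa$-type property is symmetric in its two arguments, and Lemma~\ref{stability} can then simply be applied once to each slot.
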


An immediate consequence for operators satisfying Definition~\ref{com} is the following bound:
\begin{equation}\label{BBD}
 \|[B_1(\nu_1),B_2(\nu_2)]\|\leq c\int D_\kappa(x-y)\,d|\nu_1|(x)\,d|\nu_2|(y)\,.
\end{equation}

Our main technical result is the following theorem.
\begin{thm}\label{Bk}
Let $[B,B^*]$ be of $\kappa^\infty_t$-type and let $E\geq0$. \\
(i) If $k>(\kappa+1)/2$, then
\begin{equation}\label{Bkb}
 \|B^k_\pm(\nu)P(E)\|\leq \con\, \bigg\{c_\pm(E)\int D_\kappa(x-y)\,d|\nu|(x)\,d|\nu|(y)\bigg\}^{1/2}\,,
\end{equation}
where $c_+(E)=1+\la E$, $c_-(E)=\la E$, and the bounding constant depends on $B$, $k$ and $\kappa$.\\
(ii) If $k\geq\kappa$, then also $[B_1,B^k_{2\pm}]$ and $[B^k_{1\pm},B^k_{2\pm}]$ (uncorrelated signs) are of $\kappa$-type.
\end{thm}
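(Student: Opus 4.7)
The plan is to prove (ii) first, since the $\kappa$-type bounds it provides will be used in the commutator estimates of (i), and then to derive (i) from Buchholz's Lemma~\ref{bulem} via a frequency-band decomposition.

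\textbf{Part (ii).} The operator $B^k_\pm$ is $B$ convolved in time with the distribution $T^k_\pm(\tau)$, whose behaviour at infinity is $|T^k_\pm(\tau)|\le\con\,|\tau|^{-k-1}$. This is precisely the decay rate $s=k+1\ge 1+\kappa$ required by Lemma~\ref{stability}(ii), and the hypothesis $k\ge\kappa$ is exactly what enforces $k+1\ge 1+\kappa$. The local singularity of $T^k_\pm$ at $\tau=0$ is absorbed using the smoothness assumption $B_i\in\C^\infty_t$: one writes $(\tau\mp i0)^{-k-1}=c_n\,\p_\tau^n(\tau\mp i0)^{-(k+1-n)}$ for an integer $n$ chosen so that the remaining power is locally integrable, integrates by parts to move the $n$ derivatives onto $B_2$ (producing $\p_0^n B_2$, whose commutator with $B_1$ is still $\kappa$-type by the $\kappa^\infty_t$ hypothesis), and then applies Lemma~\ref{stability}(ii) to the remaining integrable tail. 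The double-$k$ commutator $[B^k_{1\pm},B^k_{2\pm}]$ is handled by iterating the same argument in the $B_1$ slot.

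\textbf{Part (i).} I focus on the annihilation case $B^k_-$; the creation case $B^k_+$ is structurally symmetric, with the extra ``$1$'' in $c_+(E)=1+\la E$ reflecting that $P^0\ge 0$ forbids lowering below zero, so that $(B^k_+(\nu))^*P(0)=0$ and Buchholz's Lemma gives directly $\|B^k_+(\nu)P(0)\|^2\le\|[B^k_+(\nu),B^k_+(\nu)^*]\|\le\con\,I(\nu)$ (with no energy factor). For $B^k_-$, pick a smooth partition of unity $\{\ti\chi_j\}_{j\ge 1}$ on $(-\infty,0)$ with $\ti\chi_j$ supported in the frequency band $\omega\in[-(j+1)/\la,-(j-1)/\la]$, and define $B^k_{-,j}(\nu)$ to be $B^k_-(\nu)$ further smeared in time by $\chi_j=\F_1^{-1}\ti\chi_j$. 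The frequency-band support forces $B^k_{-,j}(\nu)$ to lower the energy by at least $(j-1)/\la$ at each application, so $(B^k_{-,j}(\nu))^{n_j}P(E)=0$ as soon as $(n_j-1)(j-1)>\la E$, i.e.\ $n_j=O(\la E/j)$. Buchholz's Lemma together with \eqref{BBD} yields
\[
\|B^k_{-,j}(\nu)P(E)\|^2\le(n_j-1)\,\|[B^k_{-,j}(\nu),B^k_{-,j}(\nu)^*]\|\le(n_j-1)\,c_j\int D_\kappa(x-y)\,d|\nu|(x)\,d|\nu|(y)\,,
\]
where $c_j$ is the $\kappa$-type constant for $[B^k_{-,j},B^k_{-,j}{}^*]$ produced from the hypothesis that $[B,B^*]$ is $\kappa^\infty_t$-type, after the time-convolution implicit in the kernel $\chi_j^k_-$.

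\textbf{Main obstacle.} The delicate step is the summation over $j$ needed to obtain the sharp $\sqrt{\la E\cdot I(\nu)}$-scaling. A naive triangle inequality overshoots, and the constant $c_j$ carries an apparent growth $\sim(j/\la)^{2k}$ coming from the $|\omega|^k$ factor in $\chi_j^k_-$. The correct estimate requires (a)~exploiting the approximate orthogonality of the output spectral ranges of the $B^k_{-,j}(\nu)P(E)\psi$ -- e.g.\ by further decomposing $P(E)=\sum_l P_l$ into narrow bands $P_l=P([l/\la,(l+1)/\la])$ so that $B^k_{-,j}(\nu)P_l\psi$ lives in output band $l-j$ with only $O(1)$ contributions per output band -- together with (b)~finer control of $c_j$ exploiting the oscillatory structure of $\chi_j^k_-$ to beat the $(j/\la)^{2k}$ growth. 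This is precisely where the threshold $k>(\kappa+1)/2$ enters: it is the least value of $k$ for which the remaining summation converges to $\con\,\la E\,I(\nu)$.
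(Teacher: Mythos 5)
Your part~(i) has a genuine gap, and it sits exactly where the theorem is hard. You decompose the negative frequencies into uniform bands $[-(j+1)/\la,-(j-1)/\la]$ marching off to $-\infty$, but the band touching zero (your $j=1$, supported in $[-2/\la,0]$) has \emph{no} positive lower bound on the energy transfer, so Buchholz's Lemma gives no finite power $n_j$ killing $P(E)$ there -- your estimate $n_j=O(\la E/j)$ is vacuous at $j=1$. That neighbourhood of zero transfer is the whole difficulty. The paper's proof decomposes in the opposite direction: one single high-frequency piece $B^k_>$ (transfer $\geq(2\la)^{-1}$, one application of Lemma~\ref{bulem} with $n\sim\la E$), plus dyadic pieces $B^k_n$ with transfer in $\<(2^{n+2}\la)^{-1},(2^n\la)^{-1}\>$ accumulating at zero. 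For these the vanishing of $|\w|^k$ at $\w=0$ yields $\|[B^k_n{}^*,B^k_n(a)]\|\leq\con\,2^{-n(2k-\kappa)}D_\kappa(a)$, which beats the growing iteration count $\sim2^{n}\la E$, and a plain triangle-inequality sum of the square roots converges precisely when $2k-\kappa-1>0$. So the threshold $k>(\kappa+1)/2$ comes from this infrared sum, not from your high-frequency summation; no approximate orthogonality and no oscillation argument are needed, because the $|\w|^{2k}$ growth at large frequencies is absorbed by moving time derivatives onto $B$ (writing $B^k_>$ as a rapidly decaying time convolution of $C=(1+(-i\la\p_0)^l)B$, $l>k+1$ -- this is exactly what the $\kappa^\infty_t$ hypothesis is for). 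As written, your ``main obstacle'' paragraph both misidentifies the source of the threshold and leaves the decisive estimate unproven.

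Part~(ii) as stated also fails at a technical but essential point: the global identity $(\tau\mp i0)^{-k-1}=c_n\,\p_\tau^n(\tau\mp i0)^{-(k+1-n)}$ requires $n>k$ for the new power to be locally integrable, but then the kernel decays only like $|\tau|^{-(k+1-n)}$ with $k+1-n<1$, so after integrating by parts the ``remaining tail'' is not even absolutely integrable, let alone bounded by $(\la+|\tau|)^{-1-\kappa}$ as Lemma~\ref{stability}(ii) demands. You must separate scales before trading derivatives: the paper does this in frequency space, splitting $B^k_\pm$ into $B^k_<$ (kernel $\eta^k_+$, smooth, with decay $(\la+|s|)^{-1-k}$, which is where $k\geq\kappa$ enters) and $B^k_>$ (rapidly decaying kernel convolved with $C$ as above); a position-space version with a cutoff around $\tau=0$ could be made to work, but your single global integration by parts cannot.
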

\begin{proof}
(i) We start with some function analysis. Let $\ti{\eta}(\w)$ be a~smooth  even function with support in $\<-\la^{-1},\la^{-1}\>$ and equal to $1$ on $\<-(2\la)^{-1},(2\la)^{-1}\>$. We form $\eta^k_+$ as in \eqref{ftk} and $\eqref{fwk}$. Then the use of Lemma~\ref{decay} in Appendix~\ref{dec} gives $|\eta^k_+(s)|\leq\con\,(\la+|s|)^{-1-k}$. Introduce smooth functions on~$\mR$
\begin{equation*}
 \ti{j}(\w)=\theta(\w)[\ti{\eta}(\w)-\ti{\eta}(2\w)]\,,\quad \ti{j_n}(\w)=\ti{j}(2^n\w)=\theta(\w)[\ti{\eta}(2^n\w)-\ti{\eta}(2^{n+1}\w)]
\end{equation*}
with $\supp{\ti{j}}\subset\<(4\la)^{-1},\la^{-1}\>$, $\supp{\ti{j_n}}\subset\<(2^{n+2}\la)^{-1},(2^n\la)^{-1}\>$.
It is then easy to find that
\begin{gather}
 \eta^k_+(s)-\sum_{n=0}^{N-1}j^k_{n+}(s)=\frac{1}{2^{N(k+1)}}\eta^k_+(2^{-N}s)\,,\nonumber \\
 \Big\|\eta^k_+-\sum_{n=0}^{N-1}j^k_{n+}\Big\|_1=\frac{1}{2^{Nk}}\|\eta^k_+\|_1\rightarrow 0\quad(N\rightarrow\infty)\,.\label{etanorm}
\end{gather}

Let now $B$ be a~smooth bounded operator and $k>0$. We define operators $B^k_>$, $B^k_<$ and $B^k_n$ ($n=0,1,\ldots$) by
\begin{equation*}
 \wch{B^k_>}(p)=(1-\ti{\eta}(p^0))\wch{B^k_+}(p)\,,\
 \wch{B^k_<}(p)=\ti{\eta}(p^0)\wch{B^k_+}(p)\,,\
 \wch{B^k_n}(p)=\ti{j_n}(p^0)\wch{B^k_+}(p)\,.
\end{equation*}
As $\ti{\eta}$ and $\ti{j_n}$ are Schwartz functions, all these operators are well defined and smooth, and their energy transfers are in $\<(2\la)^{-1},+\infty)$, $\<0,\la^{-1}\>$ and $\<(2^{n+2}\la)^{-1},(2^n\la)^{-1}\>$, respectively. Moreover, for $l>k+1$ we write
\begin{equation*}
 \wch{B^k_>}(p)=\ti{f_+}(p^0)(1+\la^l(p^0)^l)\wch{B}(p)\,,\quad \ti{f_+}(\w)=e^{-ik\pi/2}\theta(\w)\w^k\frac{1-\ti{\eta}(\w)}{1+\la^l\w^l}
\end{equation*}
and note that $f_+(\tau)$ decays rapidly for $|\tau|\to\infty$. In configuration space we have
\begin{gather}
 B^k_>(x)=\frac{1}{2\pi}\int C(x^0+\tau,\vx)f_+(\tau)\,d\tau\,,\quad C=\big(1+(-i\la\p_0)^l\big)B\,,\label{B>}\\
 B^k_<(x)=\frac{1}{2\pi}\int B(x^0+\tau,\vx)\,\eta^k_+(\tau)\,d\tau\,,\label{B<}\\
 B^k_n(x)=\frac{1}{2\pi}\int B(x^0+\tau,\vx)\,j^k_{n+}(\tau)\,d\tau\,,\quad j^k_{n+}(\tau)=2^{-n}j^k_+(2^{-n}\tau)\,.\label{Bnj}
\end{gather}
By assumption $[C,C^*]$ is of $\kappa$-type, so also $[B^k_>,B^{k*}_>]$ is of $\kappa$-type. Moreover, using Eq. \eqref{etanorm} we obtain
\begin{equation}\label{BBn}
 \Big\|B^k_<-\sum_{n=0}^{N-1}B^k_n\Big\|\leq\frac{1}{2^{Nk}}\|\eta^k_+\|_1\|B\|\rightarrow 0\,.
\end{equation}

Taking into account the scope of energy transfers of operators $B^k_>{}^*$ and $B^k_n{}^*$ and using Lemma~\ref{bulem} we have
\begin{gather*}
 \|B^k_>(\nu)^\#P(E)\|^2\leq d(\la E)^\#\int\|[B^k_>{}^*,B^k_>(x-y)]\|\,d|\nu|(x)\,d|\nu|(y)\,,\\
 \|B^k_n(\nu)^\#P(E)\|^2\leq d_n(\la E)^\#\int\|[B^k_n{}^*,B^k_n(x-y)]\|\,d|\nu|(x)\,d|\nu|(y)\,,
\end{gather*}
where $d(\sigma)=2\sigma+1$, $d(\sigma)^*=2\sigma$, $d_n(\sigma)=2^{n+2}\sigma+1$, $d_n(\sigma)^*=2^{n+2}\sigma$.
From the first of the above relations the thesis follows for the part $B^k_>$ of $B^k_+$. For $B^k_n$ we use \eqref{Bnj} and \eqref{BBD}, which leads to
\begin{multline*}
 \|[B^k_n{}^*,B^k_n(a)]\|\leq \con\int D_\kappa(a^0+\tau,\vec{a})\,|j^k_{n+}(\tau+s)||j^k_{n+}(s)|dsd\tau\\
 =\con\, 2^{-n(2k+1)}\int D_\kappa(a^0+\tau,\vec{a})J_k(2^{-n}\tau)d\tau\,,
\end{multline*}
where $J_k(\tau)=\int|j^k_+(\tau+u)||j^k_+(u)|du$ is a~function of fast decrease.
Therefore for each $N$ we  have the bound
\begin{equation*}
 \|[B^k_n{}^*,B^k_n(a)]\|\leq\frac{\con}{2^{n(2k+1)}}\int \frac{D_\kappa(a^0+\tau,\vec{a})}{(\la+|\tau|/2^n)^{N+1}}d\tau\,.
\end{equation*}
For $a^2\geq0$ we use $D_\kappa\leq 1$ and find that the rhs is bounded by $\con/2^{n2k}$. For $a^2<0$ we consider two integration regions separately: (i) $|\tau|\geq(|\vec{a}|-|a^0|)/2$, and (ii) the rest. In the second region $|\vec{a}|-|a^0+\tau|>(|\vec{a}|-|a^0|)/2$, so this contribution is bounded by $\con\,2^{-n2k}D_\kappa(a)$. In the first region we use \mbox{$D_\kappa\leq1$} and then this contribution is bounded by \mbox{$\con\,2^{-n2k}[\la+(|\vec{a}|-|a^0|)/2^n]^{-N}$}. Take $N\geq\kappa$, then
\begin{equation*}
 [\la+(|\vec{a}|-|a^0|)/2^n]^{-N}\leq\con\,[...]^{-\kappa}\leq\con\, 2^{n\kappa}(\la+|\vec{a}|-|a^0|)^{-\kappa}\,.
\end{equation*}
Summing up, we obtain
\begin{gather*}
 \|[B^k_n{}^*,B^k_n(a)]\|\leq\frac{\con}{2^{n(2k-\kappa)}}D_\kappa\,(a)\,,\\[1.5ex]
 \|B^k_n(\nu)^\#P(E)\|^2\leq\frac{\con\, d_n(E/E_0)^\#}{2^{n(2k-\kappa)}}\int D_\kappa(x-y)|\,d|\nu|(x)\,d|\nu|(y)\,.
\end{gather*}
Finally, taking into account \eqref{BBn} we have
\begin{equation*}
 \|B^k_<(\nu)^\#P(E)\|\leq\sum_{n=0}^\infty\|B^k_n(\nu)^\#P(E)\|\,,
\end{equation*}
which for $2k-\kappa-1>0$ is summable and leads to the result (i) of the theorem.\\
(ii) We split $B^k_{i+}=B^k_{i>}+B^k_{i<}$ and similarly for $B^k_{i-}$. The use of representations \eqref{B>} and \eqref{B<} and their analogies for $B^k_{i-}$ together with Lemma~\ref{stability} leads to the thesis.
\end{proof}

We shall denote by $G_\pm(E)$ any real functions $\<0,+\infty)\mapsto\<0,+\infty\>$ which satisfy the following conditions:
\begin{equation}\label{GE}
\begin{split}
 \text{(i)}\quad &G_\pm(E)\ \text{are non-increasing}\,,\\
 \text{(ii)}\quad &G_\pm\in L^2(\<0,+\infty))\,,\\
 \text{(iii)}\quad  &G_+(E)\leq\con\,
\end{split}
\end{equation}
(note that $G_-(0)$ may take the value $+\infty$). Moreover, we set
\begin{equation}\label{defBk}
 \wch{B^k_t}(p)=|p|^k\wch{B}(p)\,.
\end{equation}
To simplify notation we also identify $G_t=G_+$.

\begin{thm}\label{GB}
Let $G(E)$ be as defined above and let $k>(\kappa+1)/2$. \\
(i) If $[B,B^*]$ is of $\kappa^\infty_t$-type, then
\begin{equation}
 \|B^k_\pm(\nu)G_\pm(P^0)\|^2\leq\con\int D_\kappa(x-y)\,d|\nu|(x)\,d|\nu|(y)\,.\label{boundpm}
\end{equation}
(ii) If $[B,B^*]$ is of $\kappa^\infty$-type, then the expression \eqref{defBk} is the Fourier transform of a~bounded operator $B^k_t$, and
\begin{equation}
 \|B^k_t(\nu)G_t(P^0)\|^2\leq\con\int D_\kappa(x-y)\,d|\nu|(x)\,d|\nu|(y)\,.\label{boundt}
\end{equation}
In both cases the bounding constants depend on $B$, $\kappa$, $k$ and $G$.
\end{thm}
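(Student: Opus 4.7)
The plan is to reduce part (i) to Theorem~\ref{Bk} by a spectral ``layer-cake'' decomposition of $G_\pm(P^0)^2$, and to obtain part (ii) from (i) by averaging over Minkowski time directions.

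For part (i) I start from $\|B^k_\pm(\nu)G_\pm(P^0)\|^2 = \|B^k_\pm(\nu)G_\pm(P^0)^2 B^k_\pm(\nu)^*\|$. Since $G_\pm$ is non-increasing with $G_\pm(E)\to 0$, the layer-cake formula applied to $G_\pm^2$ gives
\[
 G_\pm(P^0)^2 = \int_0^{G_\pm(0)} 2u\, P(G_\pm^{-1}(u))\, du\,,
\]
where $G_\pm^{-1}(u)=\sup\{E\geq 0:G_\pm(E)>u\}$, with the upper limit taken to be $+\infty$ when $G_-(0)=+\infty$. Because $B^k_\pm(\nu)P(\cdot)B^k_\pm(\nu)^*$ is a positive operator, the norm of the spectral integral is dominated by the integral of norms. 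Inserting $\|B^k_\pm(\nu)P(E)\|^2\leq\con c_\pm(E)N$ from Theorem~\ref{Bk}(i), with $N=\int D_\kappa(x-y)\,d|\nu|(x)\,d|\nu|(y)$, and changing variable to $E=G_\pm^{-1}(u)$ gives
\[
 \|B^k_\pm(\nu)G_\pm(P^0)\|^2 \leq \con N \int_0^\infty c_\pm(E)\, d(-G_\pm(E)^2)\,.
\]

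An integration by parts in the last integral, using $c'_\pm(E)=\la$, turns it into
\[
 [-c_\pm(E) G_\pm(E)^2]_0^\infty + \la\|G_\pm\|_2^2\,,
\]
where the boundary term is interpreted as the corresponding one-sided limit. For any non-increasing $G\in L^2$ the inequalities $E\,G(E)^2\leq 2\int_{E/2}^\infty G(s)^2\,ds$ and $E\,G(E)^2\leq 2\int_0^E G(s)^2\,ds$ show that $E\,G(E)^2\to 0$ at both endpoints. For $c_-=\la E$ both limits therefore vanish, leaving $\la\|G_-\|_2^2$; for $c_+=1+\la E$ one additionally picks up the contribution $c_+(0)G_+(0)^2 = G_+(0)^2$, finite by property~(iii) of~\eqref{GE}. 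In both cases the resulting bound depends only on $B$, $k$, $\kappa$ and $G$, completing the proof of~(i).

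For part~(ii) I use the identity
\[
 |p|^k = C_k^{-1}\int_{S^3}|p\cdot e|^k\,d\sigma(e)\,,
\]
where $S^3\subset\hM$ is the Euclidean unit $3$-sphere with surface measure $d\sigma$ and $p\cdot e$ denotes the Minkowski inner product; this is verified by reducing to $p=(|p|,\vec 0)$ and using rotational symmetry of $d\sigma$. The identity $|p\cdot e|^k\wch B(p) = e^{ik\pi/2}\wch{B^k_{e,+}}(p)+e^{-ik\pi/2}\wch{B^k_{e,-}}(p)$, valid along each direction $e$ by analogy with~\eqref{ftk}--\eqref{fwk}, then writes $B^k_t(\nu)$ as a spherical integral of operators $e^{ik\pi/2}B^k_{e,+}(\nu)+e^{-ik\pi/2}B^k_{e,-}(\nu)$. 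For every forward-timelike $e$ the $\kappa^\infty$-type hypothesis reduces to $\kappa^\infty_e$-type along $e$, so part~(i) supplies a uniform bound $\|B^k_{e,\pm}(\nu)G_+(P\cdot e)\|^2\leq\con N$; the comparability of $P\cdot e$ with $P^0$ on $\ov{V_+}$ permits the replacement of $G_+(P\cdot e)$ by $G_t(P^0)=G_+(P^0)$ up to a multiplicative constant. I expect the main obstacle to be handling the lightlike and spacelike directions $e\in S^3$, where Theorem~\ref{Bk} does not apply directly; a Lorentz-covariance argument using the frame-independence of $\kappa^\infty$-type decay should close this gap.
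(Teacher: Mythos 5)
Your part (i) is correct and is essentially the paper's argument in different packaging: where you bound $\|B^k_\pm(\nu)G_\pm(P^0)^2B^k_\pm(\nu)^*\|$ by a layer-cake decomposition of $G_\pm^2(P^0)$, a change of variables and an integration by parts, the paper inserts the same input from Theorem~\ref{Bk}(i) into the spectral representation of $\|G_\vep(P^0)B^k_\vep(\nu)^*\psi\|^2$ and invokes the monotone measure-comparison Lemma~\ref{mumu}; both routes produce the same constant $\delta_{\vep,+}G_+^2(0)+\la\|G_\vep\|_2^2$, and your endpoint estimates $EG(E)^2\to0$ are the right way to justify the boundary terms.

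Part (ii), however, has a genuine gap, and it is exactly the one you flag but cannot be closed the way you hope. The spherical identity $|p|^k=C_k^{-1}\int_{S^3}|p\cdot e|^k\,d\sigma(e)$ is true (the Minkowski product with $e$ is the Euclidean product with the parity-reflected vector, and $d\sigma$ is rotation invariant), but the Euclidean sphere $S^3\subset\hM$ necessarily contains spacelike and lightlike directions $e$, and these contribute essentially to the integral. For such $e$ there is no analogue of Theorem~\ref{Bk} or of part (i): the construction of $B^k_{e,\pm}$ and the Buchholz-lemma argument rest on the spectrum of $P\cdot e$ being bounded below (and on the positive/negative frequency split with respect to a \emph{timelike} axis), which fails for spacelike $e$; moreover the comparability $t\cdot p\geq e^{-\xi}e\cdot p$ on $\ov{V_+}$, needed to trade $G_+(P\cdot e)$ for $G_t(P^0)$, also holds only for timelike future-pointing $e$. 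Lorentz covariance cannot repair this, since Lorentz transformations preserve the causal type of $e$ and never map a spacelike direction into a timelike one; indeed Corollary~\ref{plane} indicates that spacelike momentum directions behave qualitatively differently. Even if one discarded the non-timelike part of the sphere, the constants from part (i) are not uniform as $e$ approaches the light cone (both the commutator constants in the boosted frame and $\|G_+(e^{-\xi}\,\cdot\,)\|_2\sim e^{\xi/2}$ diverge), so the directional integral is not controlled, and boundedness of $B^k_t$ itself would remain unproven. The paper circumvents all of this with an algebraic, not averaging, decomposition: it fixes only \emph{four} timelike unit vectors $t_i$, chooses $k'>(\kappa+1)/2$ with $0<k-k'<1$, and writes $|p|^k=\sum_i\wh{\chi_i}(p)\,|t_i\cdot p|^{k'}(1+\la^2|p|^2)^3$ with $\chi_i,\wh{\chi_i}\in L^1$ (Lemma~\ref{decay}), so that $B^k_t$ is a finite sum of operators $C^{k'}_{i\pm}(\chi_i)$ with $C=\big(1-\la^2(\p_0^2+\triangle)\big)^3B$, to which part (i) applies directly; some such reduction to finitely many genuinely timelike axes is what your argument is missing.
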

\begin{proof}
(i) We put $\vep=\pm$ and extend $G_\vep(E)=G_\vep(0)$ for $E<0$. For each $\psi\in\Hc$ we have the spectral representation
\begin{equation*}
 \|G_\vep(P^0)B^k_\vep(\nu)^*\psi\|^2=\int_{\mR}G^2_\vep(E)\,d\|P(E)B^k_\vep(\nu)^*\psi\|^2\,,
\end{equation*}
where $G^2_\vep(E)=[G_\vep(E)]^2$. Let $J$ denote the integral on the rhs of \eqref{boundpm}. Then by Theorem~\ref{Bk} we have $\|P(E)B^k_\vep(\nu)^*\psi\|^2\leq \con\,\|\psi\|^2J\, c_\vep(E)$, where $c_\ep$ are extended to $\mR$ by $c_\ep(E)=0$ for $E<0$. But then $c_\vep(E)=\mu_\vep((-\infty,E\>)$, where $d\mu_+(E)=(\delta(E)+\la\theta(E))\,dE$ and $d\mu_-(E)=\la\theta(E)\,dE$, so by Lemma~\ref{mumu} in Appendix~\ref{integrals} we obtain
\begin{equation*}
 \|B^k_\vep(\nu)G_\vep(P^0)\|^2\leq\con\bigg\{\delta_{\vep,+}G^2_+(0)+\la\int_0^\infty G^2_\vep(E)\,dE\bigg\}\,J\,.
\end{equation*}
This ends the proof of the inequality \eqref{boundpm}.\\
(ii) We choose $k'>(\kappa+1)/2$ such that $1>k-k'>0$. We also choose a~basis $\{t_i\}_{i=1}^4$ of timelike, unit, future-pointing vectors. Let $n$ be a~positive integer such that $2n-k'\geq5$. Then for multi-indices $\al$ with $|\al|\leq5$ the functions
\begin{equation*}
 \wh{\chi_i}(p)=\frac{|p|^k\,|t_i\cdot p|^{2n-k'}}{(1+\la^2|p|^2)^3\sum_{j=1}^4[t_j\cdot p]^{2n}}
\end{equation*}
satisfy outside $p=0$ the bounds $|D^\al\wh{\chi_i}(p)|\leq\con\,|p|^{k-k'-|\al|}(1+\la^2|p|^2)^{-3}$, so the assumptions of Lemma~\ref{decay} in Appendix~\ref{dec} are satisfied with $\ga=k-k'$. Therefore, both $\wh{\chi_i}$ and $\chi_i$ are integrable. Moreover, we have
\begin{equation*}
 |p|^k=\sum_{i=1}^4\wh{\chi_i}(p)|t_i\cdot p|^{k'}(1+\la^2|p|^2)^3\,.
\end{equation*}
We denote $C=\big(1-\la^2(\p_0^2+\triangle)\big)^3B$. Then
\begin{equation*}
 \wch{B^k_t}(p)=\sum_{i=1}^4\wh{\chi_i}(p)|t_i\cdot p|^{k'}\,\wch{C}(p)=\sum_{i=1}^4\wh{\chi_i}(p)\sum_{\ep=\pm}e^{\ep ik'\pi/2}\wch{C^{k'}_{i\ep}}(p)\,,
\end{equation*}
where $C^{k'}_{i\pm}$ are defined in analogy to $C^{k'}_\pm$, but with respect to the time axes~$t_i$. This shows that $B^k_t$ is defined as a~bounded operator. Next, we note that if $t_i\cdot t=\cosh(\xi_i)$, then $t\cdot p\geq e^{-\xi_i}t_i\cdot p$ for $p\in\ov{V_+}$. Therefore $G_+(t\cdot p)\leq G_+(e^{-\xi_i}t_i\cdot p)\equiv G_{i+}(t_i\cdot p)$. The lhs of the inequality \eqref{boundt} is then bounded by $\dsp\con\sum_{i}\sum_{\ep=\pm}\|C^{k'}_{i\ep}(\chi_i)G_{i+}(t_i\cdot P)\|^2$; inequality \eqref{boundpm} applies to $C^{k'}_{i\pm}$, so by integrability of $\chi_i$ also to $C^{k'}_{i\pm}(\chi_i)$, which leads to the thesis.
\end{proof}

\section{Applications to energy-momentum transfer}\label{applic}

Before entering a~more extensive discussion with the use of further tools we note two immediate consequences of Theorem~\ref{GB}.
\begin{col}\label{point}
Let $\vph$ be an integrable function on $M$, $\ga\geq1$ and $\delta\in(0,1)$. Denote
\begin{equation*}
\begin{aligned}
 \wh{\vph_{q,\ga}}(p)&=\wh{\vph}(\ga^\delta(p^0-q^0),\ga(\vp-\vec{q}))\,,\\ \text{so}\quad
 \vph_{q,\ga}(x)&=\ga^{-3-\delta}e^{-iq\cdot x}\vph(\ga^{-\delta}x^0,\ga^{-1}\vx)\,.
\end{aligned}
\end{equation*}
Then for $B$ with $[B,B^*]$ of $\kappa^\infty$-type, $\kappa>0$, for each $k>1/2$ and all $q\in\hM$ there is
\begin{equation*}
 \lim_{\ga\to\infty}\|\wch{B^k_t}(\wh{\vph_{q,\ga}})G_t(P^0)\|=\lim_{\ga\to\infty}\|B^k_t(\vph_{q,\ga})G_t(P^0)\|=0\,.
\end{equation*}
\end{col}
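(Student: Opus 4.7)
My plan is to read off the corollary from Theorem~\ref{GB}(ii) applied to the finite complex Borel measure $d\nu(x)=\vph_{q,\ga}(x)\,dx$. The first of the two asserted equalities is simply the definitional identity $B^k_t(\nu)=\wch{B^k_t}(\wh{\nu})$ with this $\nu$, so the content of the statement lies in the vanishing of the norm. Before invoking Theorem~\ref{GB}(ii) I have to reconcile the hypothesis on~$k$: the theorem requires $k>(\kappa+1)/2$, while the corollary only assumes $k>1/2$. This will be resolved by the monotonicity $D_{\kappa'}(a)\geq D_\kappa(a)$ valid for $0<\kappa'\leq\kappa$, which means that $[B,B^*]$ of $\kappa^\infty$-type is automatically of $\kappa'^\infty$-type for any smaller $\kappa'>0$; choosing $\kappa'\in(0,\min(\kappa,2k-1))$ brings $k>(\kappa'+1)/2$ within reach.

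Theorem~\ref{GB}(ii) then gives
\begin{equation*}
 \|B^k_t(\vph_{q,\ga})G_t(P^0)\|^2\leq \con\int D_{\kappa'}(x-y)\,|\vph_{q,\ga}(x)|\,|\vph_{q,\ga}(y)|\,dx\,dy\,.
\end{equation*}
Since $|\vph_{q,\ga}(x)|=\ga^{-3-\delta}\,|\vph(\ga^{-\delta}x^0,\ga^{-1}\vx)|$, the phase $e^{-iq\cdot x}$ dropping out in modulus, the substitution $x=(\ga^\delta u^0,\ga\vec{u})$, $y=(\ga^\delta v^0,\ga\vec{v})$ absorbs the Jacobians and turns the bound into
\begin{equation*}
 \con\int D_{\kappa'}\!\bigl(\ga^\delta(u^0-v^0),\,\ga(\vec{u}-\vec{v})\bigr)\,|\vph(u)|\,|\vph(v)|\,du\,dv\,.
\end{equation*}

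What remains is to drive this integral to zero by dominated convergence. The integrand is uniformly majorized by $|\vph(u)|\,|\vph(v)|$, which is integrable on $M\times M$ since $\vph\in L^1(M)$, so I only need pointwise convergence to zero a.e. For any $(u,v)$ with $\vec{u}\neq\vec{v}$ — a set of full Lebesgue measure — one has
\begin{equation*}
 \ga|\vec{u}-\vec{v}|-\ga^\delta|u^0-v^0|=\ga^\delta\bigl(\ga^{1-\delta}|\vec{u}-\vec{v}|-|u^0-v^0|\bigr)\longrightarrow+\infty
\end{equation*}
as $\ga\to\infty$, since $\delta<1$; hence the four-vector argument of $D_{\kappa'}$ eventually leaves the closed lightcone and $D_{\kappa'}$ tends to zero. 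I do not expect any genuine obstacle; the only mildly delicate point is the parameter shift $\kappa\mapsto\kappa'$, which is purely bookkeeping, and the crucial structural input is the anisotropic scaling $\delta<1$ that guarantees spacelike separation of the supports in the limit.
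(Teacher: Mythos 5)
Your proposal is correct and follows essentially the same route as the paper: choose $\kappa'<\min\{2k-1,\kappa\}$ to meet the hypothesis of Theorem~\ref{GB}, change variables to transfer the scaling onto $D_{\kappa'}$, and conclude by dominated convergence using $D_{\kappa'}\leq 1$ and pointwise decay off the set $\vec{u}=\vec{v}$. Your explicit verification that $\ga|\vec{u}-\vec{v}|-\ga^\delta|u^0-v^0|\to+\infty$ for $\delta<1$ merely spells out what the paper leaves implicit, so there is nothing to add.
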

\begin{proof}
 For $k>1/2$ choose $0<\kappa'<\min\{2k-1,\kappa\}$. Then the use of Theorem~\ref{GB}, upon a~change of integration variables, yields
\begin{equation*}
 \|B^k_t(\vph_{q,\ga})G_t(P^0)\|^2\leq\con \int D_{\kappa'}(\ga^\delta(x^0-y^0),\ga(\vx-\vy))|\vph(x)\vph(y)|\,dxdy\,.
\end{equation*}
The function $|\vph(x)\vph(y)|$ is integrable, and the remaining factor function in the integrand is bounded and tends point-wise almost everywhere to zero for $\ga\to\infty$, which ends the proof.
\end{proof}

Another simple result of similar type is the following.

\begin{col}\label{plane}
 Let $B$, $\kappa$ and $k$ be as in the last Corollary. Let $n$ be a~unit spacelike vector and for $p\in\hM$ decompose $p=p^n\, n+p_\bot$, where $n\cdot p_\bot=0$. Let $\vph$ be an integrable function on $M$ and denote $\wh{\vph_{n,r,\ga}}(p)=\wh{\vph}(\ga(p^n-r)n+p_\bot)$. Then
\begin{equation*}
 \lim_{\ga\to\infty}\|\wch{B^k_t}(\wh{\vph_{n,r,\ga}})G_t(P^0)\|=\lim_{\ga\to\infty}\|B^k_t(\vph_{n,r,\ga})G_t(P^0)\|=0\,.
\end{equation*}
\end{col}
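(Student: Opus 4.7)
The plan is to~mirror the proof of the previous corollary: apply Theorem~\ref{GB}(ii) to bound the norm in question, translate the dilation in momentum space into a~one-dimensional rescaling in position space, and finish by dominated convergence. To this end I~first choose $\kappa'$ with $0<\kappa'<\min\{2k-1,\kappa\}$, so that $[B,B^*]$ is still of $(\kappa')^\infty$-type (since $D_\kappa\leq D_{\kappa'}$) and Theorem~\ref{GB}(ii) yields
\begin{equation*}
 \|B^k_t(\varphi_{n,r,\gamma})G_t(P^0)\|^2\leq\con\int D_{\kappa'}(x-y)|\varphi_{n,r,\gamma}(x)||\varphi_{n,r,\gamma}(y)|\,dxdy\,.
\end{equation*}

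Next I~identify $\varphi_{n,r,\gamma}$ in configuration space. Writing each $x\in M$ as $x=x^nn+x_\bot$ with $n\cdot x_\bot=0$, the substitution $q=\gamma(p^n-r)n+p_\bot$ in the inverse Fourier formula will give
\begin{equation*}
 \varphi_{n,r,\gamma}(x)=\gamma^{-1}e^{-ir(n\cdot x)}\,\varphi\bigl((x^n/\gamma)n+x_\bot\bigr)\,,
\end{equation*}
the prefactor $\gamma^{-1}$ being the Jacobian of the one-dimensional rescaling in the $n$-direction. Changing variables $u^n=x^n/\gamma$, $u_\bot=x_\bot$ (and similarly $v$) in the double integral absorbs $\gamma^{-2}$ from the two moduli against $dxdy=\gamma^2\,du\,dv$, turning the bound into
\begin{equation*}
 \|B^k_t(\varphi_{n,r,\gamma})G_t(P^0)\|^2\leq\con\int D_{\kappa'}\bigl(\gamma(u^n-v^n)n+(u_\bot-v_\bot)\bigr)|\varphi(u)||\varphi(v)|\,du\,dv\,.
\end{equation*}

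The remaining step, dominated convergence, is where the only (mild) technical point sits: I~need $D_{\kappa'}$ of the dilated argument to tend to zero pointwise almost everywhere. The integrand is uniformly bounded by $|\varphi(u)||\varphi(v)|\in L^1(M\times M)$. For a.e.\ $(u,v)$ one has $u^n\neq v^n$, and since $n$ is a~unit spacelike vector the bound $|\vec{n}|>|n^0|$ holds; hence the argument $a_\gamma=\gamma(u^n-v^n)n+(u_\bot-v_\bot)$ satisfies $a_\gamma^2\to-\infty$ and $|\vec{a}_\gamma|-|a_\gamma^0|\geq\gamma|u^n-v^n|(|\vec{n}|-|n^0|)-O(1)\to+\infty$, so $D_{\kappa'}(a_\gamma)\to 0$, and the integral vanishes in the limit. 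The only genuine piece of work is thus checking that stretching by $\gamma n$ in a~spacelike direction drives $D_{\kappa'}$ to zero; the rest is a~direct adaptation of the preceding corollary.
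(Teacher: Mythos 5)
Your proposal is correct and follows essentially the same route as the paper: a Theorem~\ref{GB} bound with a suitable $\kappa'<\min\{2k-1,\kappa\}$, the change of variables absorbing the $\gamma^{-1}$ scaling, and dominated convergence using that $|\vec{n}|>|n^0|$ for a spacelike unit vector forces $D_{\kappa'}$ of the stretched argument to zero a.e. The paper merely shortcuts the last point by choosing a Minkowski frame in which $n$ is the third spatial axis (so the dilation acts on a purely spatial coordinate, reducing literally to the previous corollary), whereas you verify the spacelike growth covariantly -- a cosmetic difference only.
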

\begin{proof}
 Choose a~Minkowski frame in which $\wh{\vph_{n,r,\ga}}(p)=\wh{\vph}(p^i,\ga(p^3-r))$, $i\neq3$, and then the result follows as in the last Corollary.
\end{proof}

We note that the form of $\wh{\vph_{q,\ga}}$ in Corollary~\ref{point} shows that the limit considered there tests point properties of $\wch{B^k_t}(p)G_t(P^0)$. In particular, it shows that this operator cannot have a~distributional component supported at one point (we shall return to this point below). At the same time this limit is a~generalization of a~weak limit used by Dybalski~\cite{dy10} in his definition of continuous spectrum of translation automorphisms (where $\vph$ is a~characteristic function of a~cuboid). Similarly, Corollary~\ref{plane} shows that $\wch{B^k_t}(p)G_t(P^0)$ cannot have a~distributional component with support on any part of any timelike hyperplane. The previous Corollary is a~particular case of the above.

In order to derive other results of similar physical interpretation we develop further tools. Let $d\nu(x)=\delta(x^0-\tau)f(\vx)\,dx$, $f\in\Sc$, and denote $B(\nu)=B(\tau,f)$. Then the bounds in Theorem~\ref{GB} take the form
\begin{equation}\label{Bff}
 \|B^k_\ep(\tau,f)G_\ep(P^0)\|^2 \leq\con\int |f(\vx)|D_\kappa(0,\vx-\vy)|f(\vy)|\,d^3xd^3y\,,
\end{equation}
where $\ep=\pm,t$. We shall denote by $\|.\|_p$ the $L^p$-norm (with the $d^3x$ or $d^4x$ measure). An immediate consequence of this bound is a~generalization and sharpening of a~result due to Buchholz~\cite{bu90}.
\begin{pr}\label{Bfbound}
Let $[B,B^*]$ be of $\kappa^\infty$-type, $\kappa>0$. Then  the following bound holds
\begin{equation}\label{Bf3}
 \|B^k_\ep(\tau,f)G_\ep(P^0)\|\leq\con\|f\|_p\,,\quad \ep=\pm,t\,,
\end{equation}
where
\begin{align*}
    &k>(\kappa+1)/2, & &p=6/(6-\kappa) & \text{for}\ &\kappa<3,\\
    &k>2, & &p=2 & \text{for}\ &\kappa>3,
\end{align*}
with the bounding constant depending on $B$, $\kappa$, $k$ and $G_\vep$.
\end{pr}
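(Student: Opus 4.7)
The plan is to combine the bound~\eqref{Bff} from Theorem~\ref{GB} with standard convolution inequalities on $\mR^3$. After applying \eqref{Bff} the task reduces to controlling
\[
Q(f)\equiv\int|f(\vx)|\,D_\kappa(0,\vx-\vy)\,|f(\vy)|\,d^3xd^3y
\]
by $\|f\|_p^2$. Since $D_\kappa(0,\vec{z})=\la^\kappa/(\la+|\vec{z}|)^\kappa\leq\la^\kappa|\vec{z}|^{-\kappa}$ for $\vec{z}\neq 0$, this is essentially a Riesz-potential quadratic form, whose estimation splits naturally into the two ranges of $\kappa$.

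For $\kappa<3$ I would apply Theorem~\ref{GB} directly and then invoke the Hardy--Littlewood--Sobolev inequality on $\mR^3$ with the homogeneous kernel $|\vec{z}|^{-\kappa}$. The HLS exponent condition $1/p+1/p+\kappa/3=2$ forces $p=6/(6-\kappa)\in(1,2)$, yielding $Q(f)\leq\con\|f\|_p^2$ with no further analysis.

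For $\kappa>3$ the condition $k>(\kappa+1)/2$ demanded by Theorem~\ref{GB} is stronger than the claimed $k>2$, so a preliminary reduction of the decay exponent is needed. The key observation is that if $[B,B^*]$ is of $\kappa^\infty$-type and $0<\kappa'\leq\kappa$, then it is automatically of $\kappa'{}^\infty$-type as well, because $D_{\kappa'}\geq D_\kappa$ in the spacelike region (and both equal $1$ elsewhere), and the same monotonicity passes to all derivatives. Under the hypotheses $k>2$ and $\kappa>3$ the interval $(3,\min\{\kappa,2k-1\})$ is nonempty; choosing $\kappa'$ in it and applying Theorem~\ref{GB} with this $\kappa'$ in place of $\kappa$ is then legitimate. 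The kernel $D_{\kappa'}(0,\cdot)$ is integrable on $\mR^3$, and Young's convolution inequality gives $Q(f)\leq\|D_{\kappa'}(0,\cdot)\|_1\,\|f\|_2^2$.

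The only mildly delicate point is this $\kappa\to\kappa'$ reduction, which rests on the elementary monotonicity of $D_\kappa$ in $\kappa$ on spacelike separations; everything else is a routine invocation of HLS or Young. All three choices $\ep=\pm,t$ are covered uniformly, because the hypothesis that $[B,B^*]$ be of $\kappa^\infty$-type is strictly stronger than the $\kappa^\infty_t$-type required for $\ep=\pm$ in Theorem~\ref{GB}(i).
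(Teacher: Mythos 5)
Your proposal is correct and follows essentially the same route as the paper: apply the bound \eqref{Bff} with the time-delta measure, use the Hardy--Littlewood--Sobolev (Sobolev) inequality with the homogeneous kernel $|\vec z|^{-\kappa}$ for $\kappa<3$, and for $\kappa>3$ reduce to some $\kappa'\in(3,\min\{\kappa,2k-1\})$ (legitimate by the monotonicity $D_{\kappa'}\geq D_\kappa$) and exploit integrability of the kernel, your Young-inequality step being the same fact the paper phrases as boundedness of the convolution operator on $L^2$ via its bounded Fourier multiplier.
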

\begin{proof}
For $\kappa<3$ and $k>(\kappa+1)/2$ the bound \eqref{Bff} applies. In this case we observe that $D_\kappa(0,\vec{z})<|\vec{z}|^{-\kappa}$ and then the application of the Sobolev inequality (see e.g.~\cite{rs79}) leads to the bound in the thesis with \mbox{$p=6/(6-\kappa)$}. For $\kappa>3$ and any $k>2$ the bound \eqref{Bff} applies with any $\kappa'\in(3,\kappa\>$ such that $\kappa'<2k-1$. In this case we consider the integral operator $K$ with the kernel $K(\vx-\vy)=(\la+|\vx-\vy|)^{-\kappa'}$. As $K(\vx)$ is an integrable function, in momentum representation this operator is the multiplication by a~bounded function. Thus $K$ is a~bounded operator in $L^2$, so $(|f|,K|f|)\leq\con\|f\|^2_2$, which ends the proof.
\end{proof}

We denote by $X$ the operator $(X\vph)(x)=x\vph(x)$.
\begin{pr}\label{Bfibound}
Let $[B,B^*]$ be of $\kappa^\infty$-type, $\kappa>0$. Then the following bound holds
\begin{equation}\label{Bfi}
 \begin{aligned}
 \|B^k_\ep(\vph)G_\ep(P^0)\|&\leq\con\int\|\vph(x^0,.)\|_p\,dx^0\\
            &\leq\con\,\|(\la+|X^0|)^\sigma\vph\|_p\,,\quad \ep=\pm,t\,,
 \end{aligned}
\end{equation}
where
\begin{align*}
    &k>(\kappa+1)/2, & &p=6/(6-\kappa), & &\sigma>\kappa/6 & &\text{for}\ \kappa<3,\\
    &k>2, & &p=2, & &\sigma>1/2 & &\text{for}\ \kappa>3,
\end{align*}
with the bounding constant depending on $B$, $p$, $k$, and $G$ (and $\sigma$ in the second form). In particular, for $\kappa>3$, $k>2$ and $\sigma>1/2$:
\begin{equation}\label{Bvp}
 \|\wch{B^k_\ep}(\wh{\vph})G_\ep(P^0)\|\leq\con\Big(\la^{2\sigma}\|\wh{\vph}\|^2_2+\|\p^\sigma_0\wh{\vph}\|^2_2\Big)\,.
\end{equation}
For $\kappa<3$, $k>(\kappa+1)/2$ one also has
\begin{multline}\label{Bfitwo}
 \|B^k_\ep(\vph)G_\ep(P^0)\|\leq\con\int\|(\la+|\vec{X}|)^\beta\vph(x^0,.)\|_2\,dx^0\\
 \leq\con\big\|(\la+|X^0|)^\tau(\la+|\vec{X}|)^\beta\vph\big\|_2\,,
\end{multline}
where $\beta>(3-\kappa)/2$ and $\tau>1/2$.
\end{pr}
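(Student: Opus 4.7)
The plan is to reduce all three bounds to Proposition~\ref{Bfbound} by first slicing in time and then using H\"older in the time variable with a polynomial weight, plus an extra H\"older in the spatial variables when $p\neq 2$.

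For the first inequality in \eqref{Bfi} I write $B^k_\ep(\vph)=\int B^k_\ep(x^0,\vph(x^0,\cdot))\,dx^0$ as an operator-valued integral. Since $G_\ep(P^0)$ commutes with $e^{ix^0P^0}$, unitary conjugation by $U(x^0,\vec 0)$ removes the $x^0$-dependence from $\|B^k_\ep(x^0,\vph(x^0,\cdot))G_\ep(P^0)\|$, and Proposition~\ref{Bfbound} gives the $x^0$-uniform bound $\con\,\|\vph(x^0,\cdot)\|_p$. Testing the operator-valued integral on $\psi\in\Hc$, applying the triangle inequality and taking the supremum over $\|\psi\|=1$ yields the first inequality. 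The second inequality is then H\"older in $x^0$ applied to the product $(\la+|x^0|)^{-\sigma}\cdot(\la+|x^0|)^{\sigma}\|\vph(x^0,\cdot)\|_p$ with conjugate exponents $(p',p)$: the weight $(\la+|x^0|)^{-\sigma}$ lies in $L^{p'}(\mR)$ exactly when $\sigma p'>1$, which translates to $\sigma>\kappa/6$ for $\kappa<3$ and $\sigma>1/2$ for $\kappa>3$.

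For \eqref{Bfitwo}, still in the regime $\kappa<3$, I upgrade the spatial $L^p$-norm to a weighted $L^2$-norm by a second H\"older: with $1/p=1/2+1/r$ I get $r=6/(3-\kappa)$ and
\begin{equation*}
 \|\vph(x^0,\cdot)\|_p\leq\|(\la+|\vec x|)^{-\beta}\|_r\,\|(\la+|\vec X|)^\beta\vph(x^0,\cdot)\|_2\,,
\end{equation*}
where the spatial weight is in $L^r(\mR^3)$ exactly for $\beta r>3$, i.e., $\beta>(3-\kappa)/2$. Inserting this into the first inequality of \eqref{Bfi} and then repeating the time-H\"older argument with $\tau>1/2$ gives \eqref{Bfitwo}.

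For \eqref{Bvp} I specialize to $\kappa>3$, $p=2$, $\sigma>1/2$, so that \eqref{Bfi} is already an $L^2$-bound. The elementary inequality $(\la+|x^0|)^{2\sigma}\leq c(\la^{2\sigma}+|x^0|^{2\sigma})$ gives
\begin{equation*}
 \|(\la+|X^0|)^\sigma\vph\|_2^2\leq c\,\la^{2\sigma}\|\vph\|_2^2+c\,\||X^0|^\sigma\vph\|_2^2\,,
\end{equation*}
and Plancherel combined with the Fourier-space identification of multiplication by $|x^0|^\sigma$ with the fractional derivative $\p_0^\sigma$ acting in the variable $p^0$ produces the stated bound; this last identification, essentially a definition modulo the paper's convention for $\p_0^\sigma$, is the one place where some care is needed. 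Everything else is a routine combination of the operator-norm triangle inequality, H\"older, and Proposition~\ref{Bfbound}.
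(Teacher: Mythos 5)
Your proposal is correct and follows essentially the same route as the paper's (very terse) proof: the first inequality in \eqref{Bfi} by time-slicing and Proposition~\ref{Bfbound}, the second by H\"older in $x^0$ with the weight $(\la+|x^0|)^{-\sigma}\in L^{p/(p-1)}$, \eqref{Bfitwo} by an additional spatial H\"older giving $\beta>(3-\kappa)/2$, and \eqref{Bvp} by Plancherel; your exponent bookkeeping ($\sigma>\kappa/6$, $\sigma>1/2$, $r=6/(3-\kappa)$) matches. The only cosmetic point is that your Plancherel step naturally bounds the square of the left-hand side of \eqref{Bvp}, but that mismatch of powers is already present in the paper's statement and is not a gap in your argument.
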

\begin{proof}
 The first inequality in \eqref{Bfi} is an immediate consequence of the bound \eqref{Bf3}, and for the second we write the integrand of the first bound as
\begin{equation*}
 f(x^0)g(x^0)=(\la+|x^0|)^{-\sigma}\times\big[(\la+|x^0|)^\sigma\|\vph(x^0,.)\|_p\big]
\end{equation*}
and use H\"older's inequality $\|fg\|_1\leq\|f\|_{p/(p-1)}\|g\|_p$. For $p=2$ the inequality \eqref{Bvp} follows by Fourier transform. Similarly, both inequalities in \eqref{Bfitwo} follow by H\"older's inequality.
\end{proof}

\begin{rem}
 If $[B,B^*]$ is of $\kappa_0^\infty$-type, then Proposition~\ref{Bfbound} and Proposition~\ref{Bfibound} apply with all $\kappa\in(0,\kappa_0\>$. In particular: the proofs of these propositions show that $\kappa_0=3$ is a~critical value, but in this case they hold with all $\kappa<3$.
\end{rem}

As a~tool for further investigation of singularities in energy-momentum transfer we introduce the following analogues of Steinmann's scaling degree of distributions~\cite{st71}.\footnote{Our definition is applied to `operator distributions' defined on the whole Minkowski space, thus we do not have to face problems of the admissibility of extension, as in the discussion of scaling degrees defined by Brunetti and Fredenhagen~\cite{br00}. Moreover, we use the convention of the original Steinmann's definition rather than the version used later by other authors, which differs by sign. Steinmann's definition gives the usual homogeneity degree for homogeneous distributions. We note also, that the context in which we use the scaling degree is `complementary' to the usual use in renormalization theory. There it serves to control short distance -- high energy-momentum behavior, while we use it to test local energy-momentum singularities.}
\begin{dfn}\label{scdegsigma}
 Let $\Sigma\subseteq\hM$ be a~smooth submanifold of codimension $m$, and $q\in\Sigma$. Choose any smooth local coordinate system $(\rho,\sigma)=(\rho^i,\sigma^j)$, \mbox{$i=1,\ldots,m$}, $j=1,\ldots,4-m$, in an open neighborhood $U$ of $q$  such that $\Sigma$ is the solution of $\rho^i=0$, \mbox{$i=1,\ldots,m$}. Let $\wh{\vph}(p)=\psi(\rho,\sigma)$, where $\psi$ is smooth and of compact support, such that $\supp\wh{\vph}\subseteq U$. Denote
\begin{equation*}
 \wh{\vph_\ga}(p)=\ga^m\psi(\ga\rho,\sigma)\,.
\end{equation*}
For a~bounded operator $B$ we define the momentum scaling degree $d_{q,\Sigma,U}(B)$ as the supremum of the set of all numbers $c$ such that
\begin{equation*}
 \lim_{\ga\to\infty}\ga^c\|B(\vph_\ga)\|=\lim_{\ga\to\infty}\ga^c\|\wch{B}(\wh{\vph_\ga})\|=0
\end{equation*}
for all functions $\wh{\vph_\ga}$ constructed in the above defined way. For $U_1\subseteq U_2$ there is $d_{q,\Sigma,U_1}(B)\geq d_{q,\Sigma,U_2}(B)$, so we can define the scaling index
\begin{equation*}
 d_{q,\Sigma}(B)=\sup_Ud_{q,\Sigma,U}(B)\,.
\end{equation*}
In particular, for $m=4$ we obtain a~scaling degree at a~point, which we denote $d_q(B)$.
\end{dfn}
\begin{rem}\label{scdegq}
If $\nu$ is a~complex Borel measure, then $d_{q,\Sigma}(B(\nu))\geq d_{q,\Sigma}(B)$ for all~$q$ and $\Sigma$.
\end{rem}
\noindent
Remark follows directly from $[B(\nu)](\vph)=[B(\vph)](\nu)$.
\begin{pr}\label{scsigma}
 Let $q\in\Sigma\subset\hM$, with $\Sigma$~a~local smooth submanifold with codimension $m$. \\
(i) If $B\in\B(\Hc)$, then
\begin{equation*}
 d_{q,\Sigma}(B)\geq-(m+4)/2\,.
\end{equation*}
(ii) If $[B,B^*]$ is of $\kappa$-type and $q\neq0$, then
\begin{align*}
 &d_{q,\Sigma}(BG_t(P^0))\geq-(m+4-\kappa)/2\,,& &\kappa<3\,,\\
 &d_{q,\Sigma}(BG_t(P^0))\geq-(m+1)/2\,,& &\kappa\geq3\,.
\end{align*}
\end{pr}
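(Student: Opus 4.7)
The plan is to treat the two parts by essentially different methods.

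For part~(i) I start from the elementary bound $\|B(\vph_\ga)\|\leq\|B\|\,\|\vph_\ga\|_1$ and estimate $\|\vph_\ga\|_1$ via Cauchy--Schwarz against a weighted $L^2$-norm: $\|\vph_\ga\|_1\leq\|(\la+|x|)^{-\al}\|_2\,\|(\la+|x|)^\al\vph_\ga\|_2$, with the first factor finite whenever $\al>2$. Parseval sends the second factor to the $L^2$-norm in momentum space of a Sobolev-type expression applied to $\wh{\vph_\ga}(p)=\ga^m\psi(\ga\rho,\sigma)$; every $\rho$-derivative of $\wh{\vph_\ga}$ pulls out a factor $\ga$, $\sigma$-derivatives do not, and the base $L^2$-norm is of order $\ga^{m/2}$. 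Hence $\|(\la+|x|)^\al\vph_\ga\|_2\leq\con\,\ga^{m/2+\al}$, and letting $\al\to 2^+$ yields $\|B(\vph_\ga)\|\leq\con\,\ga^{(m+4)/2+\ep}$ for every $\ep>0$, which is the claim.

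For part~(ii) the crucial first move exploits $q\neq 0$. Choose a smooth $h$ on $\hM$ compactly supported away from $p=0$ and equal to~$1$ in a neighborhood of $q$; the function $\eta(p):=|p|^{-k}h(p)$ is then smooth and compactly supported, and its inverse Fourier transform is Schwartz. For $\ga$ large enough, $\supp\wh{\vph_\ga}\subset\{h=1\}$, so $\wh{\vph_\ga}(p)=\eta(p)\,|p|^k\,\wh{\vph_\ga}(p)$ and consequently $B(\vph_\ga)=B^k_t(\tilde\vph_\ga)$, where $\wh{\tilde\vph_\ga}:=\eta\,\wh{\vph_\ga}$, so that $\tilde\vph_\ga$ is (up to normalization) the convolution of $\vph_\ga$ with a fixed Schwartz function; $k$ is chosen so that both $k>(\kappa+1)/2$ and $k>2$ hold.

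Proposition~\ref{Bfibound} applied to $B^k_t(\tilde\vph_\ga)G_t(P^0)$ then bounds its norm by $\con\,\|(\la+|X^0|)^\sigma\tilde\vph_\ga\|_2$ with $\sigma>1/2$ in the case $\kappa\geq 3$, respectively by $\con\,\|(\la+|X^0|)^\tau(\la+|\vec X|)^\beta\tilde\vph_\ga\|_2$ with $\tau>1/2$, $\beta>(3-\kappa)/2$ in the case $\kappa<3$. A Young-type inequality with the Schwartz convolution kernel absorbs the passage from $\tilde\vph_\ga$ back to $\vph_\ga$ up to constants, and the crude pointwise bound $(\la+|x^0|)^\tau(\la+|\vec x|)^\beta\leq(\la+|x|)^{\tau+\beta}$ reduces everything to estimating $\|(\la+|x|)^\gamma\vph_\ga\|_2$. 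The Parseval/derivative-counting argument of part~(i) gives $\|(\la+|x|)^\gamma\vph_\ga\|_2\leq\con\,\ga^{m/2+\gamma}$; letting the free exponents fall to their infima --- $\tau+\beta\to(4-\kappa)/2$ for $\kappa<3$ and $\sigma\to 1/2$ for $\kappa\geq 3$ --- produces the asserted exponents $(m+4-\kappa)/2$ and $(m+1)/2$.

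The principal obstacle is clean bookkeeping of the derivative-counting estimate through the generally curvilinear local coordinates $(\rho,\sigma)$ on $\hM$ and the relation to Minkowski coordinates $(x^0,\vec x)$ on $M$; one must also verify that the Young step with the Schwartz kernel does not spoil the exponent.
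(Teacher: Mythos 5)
Your part~(i) is fine and is essentially the paper's own argument (Cauchy--Schwarz against the weight $(\la+|X|)^{b}$, $b>2$, Plancherel, derivative counting in the $(\rho,\sigma)$ chart, interpolation for fractional powers).

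In part~(ii) there is a genuine gap at the central step. You apply Proposition~\ref{Bfibound} to $B^k_t(\tilde\vph_\ga)G_t(P^0)$ with the \emph{original} operator $B$, but Proposition~\ref{Bfibound} (and Theorem~\ref{GB}(ii), on which the very definition of $B^k_t$ as a bounded operator rests) requires $[B,B^*]$ to be of $\kappa^\infty$-type, i.e.\ norm-smoothness of $B(x)$ and $\kappa$-type bounds for all commutators of derivatives. Proposition~\ref{scsigma}(ii) only assumes $[B,B^*]$ of $\kappa$-type, so the bound you invoke is simply not available for the raw $B$, and your device of pushing the whole factor $\eta(p)=|p|^{-k}h(p)$ into the test function and then undoing the Schwartz convolution by a Young-type inequality never regularizes the operator -- it only shuffles norms of test functions. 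The paper's proof is arranged precisely to close this hole: the compactly supported function $\wh{h}$, equal to $|p|^{-k}$ on $U$, is absorbed into the \emph{operator}, giving $\wch{B}(\wh{\vph_\ga})=\wch{B(h)^k_t}(\wh{\vph_\ga})$, and since $h$ is Schwartz, Lemma~\ref{stability} and the smearing lemma following it upgrade $[B(h),B(h)^*]$ to $\kappa^\infty$-type, so Proposition~\ref{Bfibound} legitimately applies to $B(h)^k_t$ with the unmodified $\vph_\ga$. Your argument can be repaired in the same spirit (e.g.\ split $\eta=\wh{h_1}\,\eta$ with $\wh{h_1}$ smooth, compactly supported, equal to $1$ on $\supp\eta$, attribute $\wh{h_1}$ to the operator and $\eta$ to the test function), but as written the key hypothesis is not met.

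Two smaller points. First, requiring $h=1$ only ``in a neighborhood of $q$'' is not enough: $\supp\wh{\vph_\ga}$ shrinks only in the $\rho$ directions and accumulates on a fixed compact piece of $\Sigma\cap U$ (the $\sigma$-extent of $\supp\psi$ does not shrink), so you need $h=1$ on a neighborhood of $\ov{U}$, after first shrinking $U$ so that $0\notin\ov{U}$ (possible since $q\neq0$, and harmless because $d_{q,\Sigma}$ is a supremum over $U$). Second, once this is fixed, your exponent bookkeeping ($s=m/2+\tau+\beta$, resp.\ $m/2+\sigma$, with the crude bound $(\la+|x^0|)^\tau(\la+|\vec x|)^\beta\leq(\la+|x|)^{\tau+\beta}$) does reproduce the paper's exponents, including the limiting values $(m+4-\kappa)/2$ and $(m+1)/2$.
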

\begin{proof}
If $B\in\B(\Hc)$, then for each $b>2$ we have the estimate
\begin{equation}\label{Bbare}
 \|B(\vph_\gamma)\|\leq\|B\|\|\vph_\gamma\|_1\leq\con\|(\la^b+|X|^b)\vph_\gamma\|_2\,.
\end{equation}
Moreover, if $[B,B^*]$ is of $\kappa$-type and $q\neq0$, then referring to Definition~\ref{scdegsigma} we can assume without restricting generality that $0\notin \ov{U}$. In this case let $k$ be as given by Proposition~\ref{Bfibound}. For $\wh{\vph_\ga}$ with support in $U$ there is $\wch{B}(\wh{\vph_\ga})=\wch{B(h)^k_t}(\wh{\vph_\ga})$, where $\wh{h}$ is a~test function of compact support, equal to $|p|^{-k}$ on~$U$. Proposition~\ref{Bfibound} now gives
\begin{equation}\label{Bh}
 \|\wch{B}(\wh{\vph_\ga})G_t(p^0)\|\leq\con\,\big\|(\la^\tau+|X^0|^\tau)(\la^\beta+|\vec{X}|^\beta)\vph_\ga\big\|_2\,,
\end{equation}
where $\tau>1/2$, $\beta>(3-\kappa)/2$ for $\kappa<3$ and $\beta=0$ for $\kappa>3$.

It is easily seen that one can find a~continuous function of compact support $\Psi(\rho,\sigma)$ such that $|D^\al\wh{\vph_\ga}(p)|\leq\ga^{|\al|+m}\Psi(\ga\rho,\sigma)$ for $|\al|$ in a~finite set ($D^\al$ are with respect to Minkowski coordinates). Therefore,  changing the integration variables from Minkowski $p^i$ to $\rho^i,\sigma^j$ one finds that \mbox{$\|D^\al\wh{\vph_\ga}\|_2\leq\con\,\gamma^{|\al|+(m/2)}$}. The use of Lemma~\ref{inequal} in Appendix~\ref{inequality} now shows that the norms on the rhs of \eqref{Bbare} and \eqref{Bh} are bounded by $\con\,\ga^s$, where $s=(m/2)+b$ for \eqref{Bbare}, and $s=(m/2)+\tau+\beta$ for \eqref{Bh}. Therefore, $s$ is any number $>(m+4)/2$ in case of \eqref{Bbare}. For the bound \eqref{Bh} there are two subcases: if $\kappa<3$, then $s$ is any number $>(m+4-\kappa)/2$, and if $\kappa>3$, then $s$ is any number $>(m+1)/2$. The thesis follows.
\end{proof}

A consequence of this result is the following statement. The scaling degree $\w(T)$ below is the supremum of the set of numbers $c$ such that $\ga^cT(\psi_\gamma)\to0$ for all test functions $\psi_\ga(\rho)=\ga^m\psi(\ga\rho)$ and $\ga\to\infty$, $\psi$ smooth of compact support.
\begin{pr}\label{homsurf}
Let $[B,B^*]$ be of $\kappa$-type and $\eta$ a~norm-continuous linear functional on $\B(\Hc)$. Let $q$, $U$ and $\rho^1,\ldots,\rho^m$ be as denoted in Definition~\ref{scdegsigma}, but here the variables $\rho$ are fixed. Suppose that locally on $U$ there is
\begin{equation*}
 \eta(\wch{B}(p)G_t(P^0))=\wh{f}(p)T(\rho)+\wh{g}(p)\,,
\end{equation*}
where $\wh{f}$ is a~smooth function on $U$, $\wh{f}(q)\neq0$, $\wh{g}\in L^1(U,dp)$, and $T$ is an $m$-dimensional distribution with scaling degree $\w(T)$.
Then
\begin{equation}\label{wT}
 \w(T)\geq\min\{d_{q,\Sigma}(BG_t(P^0)),-m\}\,.
\end{equation}
Moreover, if $-m<d_{q,\Sigma}(BG_t(P^0))$, then
\begin{equation}\label{m0}
 \lim_{\ga\to\infty}\ga^{-m}T(\psi_\ga)=0\,.
\end{equation}
Therefore, the only distributions $T$ concentrated at $\rho^i=0$ which are not forbidden by this theorem are the following:\\
-- Dirac delta at $q=0$ for $m=4$,\\
-- Dirac delta for $m=1$,\\
-- for $\kappa\leq2$: additionally Dirac delta for $m=2$,\\
-- for $\kappa\leq1$: additionally Dirac delta for $m=3$ and the first derivative of delta for $m=1$.
\end{pr}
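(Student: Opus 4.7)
The plan is to test the assumed local decomposition against test functions from the family in Definition~\ref{scdegsigma}. Given $\psi\in C_c^\infty(\mR^m)$ for which I want to bound $T(\psi_\ga)$, I select a bump $\chi$ in $\sigma$-space supported near $\sigma_0$ (the $\sigma$-coordinate of $q=(0,\sigma_0)$) with
\begin{equation*}
 A:=\int\wh{f}(0,\sigma)\,J(0,\sigma)\,\chi(\sigma)\,d\sigma\ne 0,
\end{equation*}
where $J$ is the Jacobian of $(\rho,\sigma)\mapsto p$; such $\chi$ exists since $\wh{f}(q)\ne 0$ and $J(q)\ne 0$. The function $\wh{\vph_\ga}(p):=\ga^m\psi(\ga\rho)\chi(\sigma)$ is then of the form admissible in Definition~\ref{scdegsigma}. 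Inserting it into the assumed decomposition and applying $\eta$ yields
\begin{equation*}
 \eta\bigl(\wch{B}(\wh{\vph_\ga})G_t(P^0)\bigr)=T(F_\ga)+\int\wh{g}(p)\,\wh{\vph_\ga}(p)\,dp,
\end{equation*}
where $F_\ga(\rho)=\ga^m\psi(\ga\rho)\,a(\rho)$ and $a(\rho):=\int\wh{f}(\rho,\sigma)J(\rho,\sigma)\chi(\sigma)\,d\sigma$ is smooth with $a(0)=A$. The LHS is $o(\ga^{-c})$ for every $c<d:=d_{q,\Sigma}(BG_t(P^0))$ by Definition~\ref{scdegsigma} and $\|\eta\|<\infty$. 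For the $\wh{g}$-integral, the change of variable $s=\ga\rho$ rewrites its modulus as $\ga^m$ times the $d\rho\,d\sigma$-integral of $|\wh{g}|J$ over the shrinking region $|\rho|\le K/\ga$; since $\wh{g}\in L^1(U,dp)$, absolute continuity of the $L^1$ measure yields the sharp estimate $o(\ga^m)$, which is crucial for the Moreover part below.

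Taylor-expanding $a(\rho)$ to order $N$ at $\rho=0$ and using the identity $\rho^\alpha\ga^m\psi(\ga\rho)=\ga^{-|\alpha|}(\psi^{(\alpha)})_\ga(\rho)$ with $\psi^{(\alpha)}(s):=s^\alpha\psi(s)$ gives
\begin{equation*}
 T(F_\ga)=A\,T(\psi_\ga)+\sum_{1\le|\alpha|\le N}\ga^{-|\alpha|}b_\alpha\,T\bigl((\psi^{(\alpha)})_\ga\bigr)+O(\ga^{m+N_0-N-1}),
\end{equation*}
where $b_\alpha:=\p^\alpha a(0)/\alpha!$ and $N_0$ is the local order of $T$ on a neighborhood of $0$. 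Taking $N$ large enough that the remainder is negligible, and combining with the first display, I obtain the master identity
\begin{equation*}
 A\,T(\psi_\ga)=o(\ga^{-c})+o(\ga^m)-\sum_{1\le|\alpha|\le N}\ga^{-|\alpha|}b_\alpha\,T\bigl((\psi^{(\alpha)})_\ga\bigr)+O(\ga^{m+N_0-N-1}).
\end{equation*}
The first claim now follows by a finite bootstrap. The \emph{a~priori} bound $|T(\psi_\ga)|\le\con\,\ga^{m+N_0}$ gives $\omega(T)\ge -m-N_0$; assuming inductively $\omega(T)\ge c_1$, each $|T((\psi^{(\alpha)})_\ga)|=o(\ga^{-c_1+\vep})$ for every $\vep>0$, so the correction sum is $o(\ga^{-1-c_1+\vep})$, and the master identity raises the bound to $\omega(T)\ge\min\{d,-m,c_1+1\}$. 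Iterating this finitely many times yields $\omega(T)\ge\min\{d,-m\}$.

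For the Moreover claim, assume $d>-m$ and fix $c\in(-m,d)$. Multiplying the master identity by $\ga^{-m}$: the first term is $o(\ga^{-m-c})=o(1)$, the second is $o(1)$ by the sharp $\wh{g}$-estimate, and each correction is $\ga^{-m-|\alpha|}\cdot o(\ga^{m+\vep})=o(\ga^{-|\alpha|+\vep})\to 0$ using the now-established $\omega(T)\ge -m$. Hence $\ga^{-m}T(\psi_\ga)\to 0$. The enumeration of admitted distributions is then immediate: any $T$ supported at $\rho=0$ has the form $T=\sum_{|\alpha|\le N_1}c_\alpha\p^\alpha\delta$ with $\omega(\p^\alpha\delta)=-m-|\alpha|$; the main bound forces each non-zero coefficient to satisfy $-m-|\alpha|\ge\min\{d,-m\}$, while the Moreover excludes the $|\alpha|=0$ term unless the lower bound on $d$ permits $d\le -m$. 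Substituting the bounds from Proposition~\ref{scsigma}(ii) (and from~(i) at $q=0$ when $m=4$) yields the enumerated $(\kappa,m)$-cases. The main obstacle is the apparent circularity of the Taylor corrections in the master identity, resolved by the finite bootstrap; its success in turn rests on the sharp $o(\ga^m)$ control of the $\wh{g}$-piece (from absolute continuity of the $L^1$ measure), without which neither the inductive step nor the Moreover conclusion would close.
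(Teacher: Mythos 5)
Your proof is correct, but it reaches \eqref{wT} and \eqref{m0} by a genuinely different mechanism than the paper. The paper neutralizes the modulation $\wh{f}(p)\,|J(\rho,\sigma)|$ at the level of the operator: it shrinks $U$ so that $|\wh{f}|\geq\con>0$, introduces $h$ with $\wh{h}=[\wh{f}\,|J|]^{-1}$ on $U$, and works with $B(h)$; testing with $\psi_\ga(\rho)\chi(\sigma)$, $\int\chi\,d\sigma=1$, then produces exactly $T(\psi_\ga)+o(\ga^m)$, and the single estimate $|\eta\big(B(h)(\vph_\ga)G_t(P^0)\big)|\leq\|\eta\|\,\|h\|_1\,\|B(\vph_\ga)G_t(P^0)\|$ yields both claims at once -- no Taylor expansion, no use of the order of $T$, no induction. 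You instead keep the raw decomposition, so the $T$-term appears as $T(\psi_\ga\,a)$ with the smooth factor $a(\rho)=\int\wh{f}J\chi\,d\sigma$, and you strip off $a$ by Taylor expansion plus a finite bootstrap anchored on the a priori bound $|T(\psi_\ga)|\leq\con\,\ga^{m+N_0}$ coming from the finite local order $N_0$ of $T$; this is heavier, but it closes, since the correction terms involve the fixed admissible test functions $s^\al\psi$, the constants $A,b_\al$ are independent of $\psi$, each step raises the exponent by $1$, and the $o(\ga^m)$ control of the $\wh{g}$-piece (absolute continuity of the $L^1$ integral over the shrinking region, the same device as in the paper) is what makes the Moreover part work. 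What the paper's route buys is brevity and no appeal to the order of $T$; what yours buys is that you never need $\wh{f}$ bounded away from zero on $U$ nor the auxiliary smeared operator $B(h)$, only $\wh{f}(q)\neq0$. Two small points to tighten: in the enumeration you should note, as the paper does, that for $m\leq3$ one may shift $q$ slightly along $\Sigma$ away from $0$ so that Proposition~\ref{scsigma}(ii) applies (while $m=4$, $q=0$ uses part (i)); and both your argument and the paper's implicitly use that for $c<d_{q,\Sigma}(BG_t(P^0))$ the decay $\ga^c\|B(\vph_\ga)G_t(P^0)\|\to0$ is available for a family supported in the given $U$ -- strictly one passes to a smaller $U'$ realizing the supremum and pre-rescales $\psi$, which changes nothing asymptotically.
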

\begin{proof}
Without restricting generality we can assume that $U$ is small enough to satisfy: $|\wh{f}(p)|\geq\con>0$ on $U$. Let $\wh{h}(p)$ be a~smooth function of compact support, equal to $[\wh{f}(p)|J(\rho,\sigma)|]^{-1}$ on $U$, where $J$ is the Jacobian of transformation from $p^i$ to $(\rho^i,\sigma^j)$, with any $\sigma^j$ as denoted in Definition~\ref{scdegsigma}. Then locally on~$U$
\begin{equation*}
 \eta\big(\wch{B(h)}(p)G_t(P^0)\big)=\frac{T(\rho)}{|J(\rho,\sigma)|}+\wh{F}(p)\,,\quad \wh{F}(p)=\frac{\wh{g}(p)}{\wh{f}(p)|J(\rho,\sigma)|}\,.
\end{equation*}
Let $\wh{\vph_\ga}(p)=\psi_\ga(\rho)\chi(\sigma)$, with $\psi_\ga(\rho)=\ga^m\psi(\ga\rho)$ and $\int\chi(\sigma)d\sigma=1$. Then
\begin{equation}\label{etaTo}
 \eta\big(B(h)(\vph_\ga)G_t(P^0)\big)=T(\psi_\ga)+o(\ga^m)\,,
\end{equation}
where the second term on the rhs results from the estimate
\begin{equation*}
 \ga^{-m}\|\wh{F}\wh{\vph_\ga}\|_1\leq\con\hspace{-.7em} \int\limits_{\ga\rho^i\leq\con}\hspace{-.7em}|\wh{F}(p)|dp\,\longrightarrow0\quad (\ga\to\infty).
\end{equation*}
The estimate $\big|\eta\big(B(h)(\vph_\ga)G_t(P^0)\big)\big|\leq\|\eta\|\|h\|_1\|B(\vph_\ga)G_t(P^0)\|$ now leads to the bound \eqref{wT} for~$\w(T)$. Property~\eqref{m0} also follows from equation~\eqref{etaTo}. If $T$ is concentrated at $\rho=0$, then it is a~finite combination of the derivatives of delta, and $\w(T)=-m-l$, where $l$ is the highest degree of derivative in the combination. Eq.\,\eqref{m0} shows that such distributions are not allowed if $-m<d_{q,\Sigma}(BG_t(P^0))$. Therefore, taking into account Eq.\,\eqref{wT}, we must have $-m-l\geq d_{q,\Sigma}(BG_t(P^0))$. In case when $m=4$ and $q=0$
we refer to Proposition~\ref{scsigma} (i), which excludes $l\geq1$, but leaves Dirac delta, which is the first of the possibilities admitted by the thesis. In all other cases we can assume $q\neq0$: for $m\leq3$ one can shift $q$ slightly over $\Sigma$, if necessary. By Proposition~\ref{scsigma} (ii) we now have $m+2l\leq1$ for $\kappa\geq3$, and $m+2l\leq 4-\kappa$ for $\kappa<3$. This leads to all other cases enumerated in the thesis.
\end{proof}

\section*{Appendix}
\setcounter{section}{0}
\renewcommand{\thesection}{\Alph{section}}

\section{A decay property}\label{dec}

\begin{lem}\label{decay}
 Let $F$ be a~function on $\mR^n\setminus\{0\}$ and $\ga>-n$. Suppose that for  multi-indices $\al$ with $|\al|\leq\ga+n+1$  all $D^\al F$ are measurable functions, $D^\al F\in L^1(|p|\geq\la^{-1})$,  and for $|p|\in(0,\la^{-1})$ satisfy the bounds: $|D^\al F(p)|\leq\con\,|p|^{\ga-|\al|}$. Moreover, we demand $\dsp\lim_{|p|\to\infty}|D^\al F(p)||p|^{n-1}=0$ for $|\al|\leq\ga+n$. Then
\begin{equation*}
 |\wch{F}(x)|\leq\frac{\con}{(\la+|x|)^{n+\ga}}\,.
\end{equation*}
\end{lem}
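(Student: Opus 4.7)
The plan is to split the analysis according to the size of $|x|$ relative to $\la$. For $|x|\leq 2\la$ the weight $(\la+|x|)^{-n-\ga}$ is comparable to a constant, so it suffices to show that $\wch{F}$ is uniformly bounded; this follows from $F\in L^1(\mR^n)$, since near the origin the pointwise bound $|F(p)|\leq\con\,|p|^\ga$ is integrable ($\ga>-n$), and for $|p|\geq\la^{-1}$ integrability is assumed directly.

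For $|x|\geq 2\la$ fix a smooth radial cutoff $\chi$ on $\mR^n$ with $\chi=1$ on $|p|\leq 1$ and $\supp\chi\subset\{|p|\leq 2\}$, and split $F=F_1+F_2$ with $F_1(p)=\chi(|x|p)F(p)$. The low-frequency piece $F_1$ is supported in $|p|\leq 2/|x|\leq\la^{-1}$, where $|F(p)|\leq\con\,|p|^\ga$ applies; a direct radial integration then yields $\|F_1\|_1\leq\con\,|x|^{-n-\ga}$ and hence $|\wch{F_1}(x)|\leq\con\,|x|^{-n-\ga}$.

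For the complementary piece $F_2=(1-\chi(|x|\cdot))F$, fix an integer $N$ with $\ga+n<N\leq\ga+n+1$. The aim is to prove $\|D^\al F_2\|_1\leq\con\,|x|^{N-n-\ga}$ for every $|\al|=N$; combined with the standard Fourier identity $|x|^N|\wch{F_2}(x)|\leq\con\max_{|\al|=N}\|D^\al F_2\|_1$ this gives $|\wch{F_2}(x)|\leq\con\,|x|^{-n-\ga}$. Applying the Leibniz rule, the mixed terms $D^{\al-\beta}[1-\chi(|x|\cdot)]\cdot D^\beta F$ with $\beta<\al$ are supported in the annulus $1/|x|\leq|p|\leq 2/|x|$; there $|D^{\al-\beta}[1-\chi(|x|\cdot)](p)|\leq\con\,|x|^{|\al-\beta|}$, while $|D^\beta F(p)|\leq\con\,|p|^{\ga-|\beta|}$ is bounded on the annulus by $\con\,|x|^{|\beta|-\ga}$ (in both sign regimes of the exponent), so after multiplication by the annular volume $\sim|x|^{-n}$ each such term contributes exactly $|x|^{N-n-\ga}$. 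The diagonal term $(1-\chi(|x|\cdot))D^\al F$, split at $|p|=\la^{-1}$, gives $\int r^{\ga+n-1-N}\,dr$ over $[1/|x|,\la^{-1}]$, which converges because $N>\ga+n$ and equals $\con\,|x|^{N-n-\ga}$ (lower endpoint dominating), while on $|p|\geq\la^{-1}$ the hypothesis $D^\al F\in L^1$ contributes a bounded term absorbed into $|x|^{N-n-\ga}$ since $|x|\geq 2\la$.

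The principal technical point is the validity of the Fourier identity $(ix)^\al\wch{F_2}(x)=\wch{D^\al F_2}(x)$ (up to sign). This is legitimate because $F_2$ is $C^N$ on all of $\mR^n$ ($F$ being $C^N$ on $\mR^n\setminus\{0\}$ by the differentiability hypothesis, and $1-\chi(|x|\cdot)$ vanishing in a neighborhood of $p=0$), and the Leibniz analysis above shows $D^\beta F_2\in L^1(\mR^n)$ for every $0\leq|\beta|\leq N$; the decay hypothesis $\lim_{|p|\to\infty}|D^\al F||p|^{n-1}=0$ for $|\al|\leq\ga+n$ guarantees the vanishing of boundary contributions at infinity in the iterated integrations by parts. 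The bounds on $\wch{F}$ for small $|x|$ together with those for $\wch{F_1}$ and $\wch{F_2}$ for large $|x|$ yield the claim.
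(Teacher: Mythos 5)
Your proof is correct and follows essentially the same route as the paper: split momentum space at the scale $|p|\sim|x|^{-1}$, bound the low-momentum part directly from $|F(p)|\leq\con|p|^{\ga}$, and gain the decay on the high-momentum part by integration by parts, using the derivative bounds near zero, the $L^1$ hypothesis and the decay at infinity. The only difference is technical: you use a smooth cutoff and perform all $N$ integrations by parts at once, the Leibniz terms on the transition annulus playing the role of the paper's Gauss surface terms at $|p|=|x|^{-1}$, whereas the paper first reduces to $\ga\in(-n,-n+1\>$ and then works with a sharp cutoff.
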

\begin{proof}
Suppose $\ga\in(-n+l,-n+1+l\>$. Then multiplying $\wch{F}(x)$ by all $x^\al$ with $|\al|=l$ and integrating by parts in standard way (which is possible due to our assumptions) one reduces the problem to the case $\ga\in(-n,-n+1\>$, which we now prove. In this case the assumption on integrability and on behavior in a~neighborhood of zero applies to $|\al|\leq2$ if $\ga=-n+1$, and to $|\al|\leq1$ otherwise; the assumption on the limit in infinity applies to $|\al|\leq1$ and $|\al|=0$ respectively. For $|x|\leq\la$ the transform is bounded by a~constant. Let $|x|>\la$. Then
\begin{equation*}
 \bigg|\int\limits_{|p|\leq|x|^{-1}} F(p)e^{-ip\cdot x}\,d^np\,\bigg|\leq\con\int_0^{|x|^{-1}}|p|^{\ga+n-1}d|p|\leq\con\,|x|^{-n-\ga}\,.
\end{equation*}
We multiply the remaining integral by $x$   and integrate by parts (Gauss' theorem). This yields
\begin{multline*}
 |x|\bigg|\int\limits_{\ |p|\geq|x|^{-1}}\! F(p)e^{-ip\cdot x}\,d^np\,\bigg|\leq\con\int\limits_{\ |p|=|x|^{-1}}\! |F(p)|\,d|S|(p)\\
 +\con\bigg|\int\limits_{\ |p|\geq|x|^{-1}}\! \p F(p)e^{-ip\cdot x}\,d^np\,\bigg|\,,
\end{multline*}
where $dS(p)$ is the dual integration element on $|p|=|x|^{-1}$. The first term on the rhs is bounded by $\con|x|^{-n-\ga+1}$. The same is true for the second term, if $\ga<-n+1$. If $\ga=-n+1$ we multiply the integral in the second term by $x$ and integrate by parts, which yields two terms analogous to those in the relation above (with one derivative more), both bounded by $\con|x|$. This closes the proof.
\end{proof}

\section{An inequality}\label{inequality}

\begin{lem}\label{inequal}
 Let $f\in L^s(X,d\mu)$, $s\in(0,\infty)$, and $h$ be a~measurable function on $X$. Then for each $\ep\in\<0,1\>$:
\begin{equation}\label{ineq}
 \|h^\ep f\|_s\leq\|f\|_s^{1-\ep}\|hf\|_s^\ep\,.
\end{equation}
In particular, for $f\in L^2(\mR^n,d^nx)$, $\beta=l+\ep$:
\begin{equation}\label{ineqf}
\begin{gathered}
 \||x^i|^\beta f\|_2\leq\|\p_i^l\wh{f}\|_2^{1-\ep}\|\p_i^{l+1}\wh{f}\|_2^\ep\,,\\[1ex]
 \||x|^\beta f\|_2\leq\Big(\sum_{i_1\ldots i_l}\|\p_{i_1}\ldots\p_{i_l}\wh{f}\|_2^2\Big)^\frac{1-\ep}{2}
 \Big(\sum_{j_1\ldots j_{l+1}}\|\p_{j_1}\ldots\p_{j_{l+1}}\wh{f}\|_2^2\Big)^\frac{\ep}{2}\,.
\end{gathered}
\end{equation}
\end{lem}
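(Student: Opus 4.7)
The plan is to derive the master inequality \eqref{ineq} directly from H\"older's inequality, and then obtain the two specialized inequalities \eqref{ineqf} by applying \eqref{ineq} with a well-chosen weight and invoking Plancherel.

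For the master inequality, observe that for $\epsilon \in (0,1)$ one can write $|h|^{s\epsilon}|f|^s = |f|^{s(1-\epsilon)}\cdot |hf|^{s\epsilon}$, so that
\begin{equation*}
 \|h^\epsilon f\|_s^s = \int |f|^{s(1-\epsilon)}|hf|^{s\epsilon}\, d\mu.
\end{equation*}
Applying H\"older's inequality with conjugate exponents $p=1/(1-\epsilon)$ and $q=1/\epsilon$ gives
\begin{equation*}
 \|h^\epsilon f\|_s^s \leq \|f\|_s^{s(1-\epsilon)}\,\|hf\|_s^{s\epsilon},
\end{equation*}
and extracting the $s$-th root yields \eqref{ineq}. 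The boundary values $\epsilon=0$ and $\epsilon=1$ are trivial.

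For the first bound in \eqref{ineqf}, I would apply \eqref{ineq} with $s=2$, $h=|x^i|$, and with $f$ replaced by $g=|x^i|^l f$, writing $|x^i|^\beta f = |x^i|^\epsilon g$. This yields
\begin{equation*}
 \||x^i|^\beta f\|_2 \leq \||x^i|^l f\|_2^{1-\epsilon}\,\||x^i|^{l+1}f\|_2^{\epsilon}.
\end{equation*}
Since multiplication by $x^i$ corresponds (up to a phase) to $-i\partial_{p^i}$ on the Fourier side, Plancherel's identity gives $\||x^i|^m f\|_2 = \|\partial_i^m \wh{f}\|_2$ for any nonnegative integer $m$, which transforms the above into the desired bound.

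For the second bound in \eqref{ineqf}, the same strategy works once $|x|^l f$ has been expressed correctly. Expanding $|x|^{2l}=(\sum_i(x^i)^2)^l = \sum_{i_1,\ldots,i_l}(x^{i_1}\cdots x^{i_l})^2$ and applying Plancherel componentwise gives
\begin{equation*}
 \||x|^l f\|_2^2 = \sum_{i_1,\ldots,i_l}\|\partial_{i_1}\cdots\partial_{i_l}\wh{f}\|_2^2,
\end{equation*}
and analogously for $l+1$. Applying \eqref{ineq} with $h=|x|$ to $g=|x|^l f$, i.e.\ $|x|^\beta f = |x|^\epsilon g$, yields
\begin{equation*}
 \||x|^\beta f\|_2 \leq \||x|^l f\|_2^{1-\epsilon}\,\||x|^{l+1}f\|_2^{\epsilon},
\end{equation*}
and substituting the two identities above produces the claimed inequality. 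There is no real obstacle: the only thing to be careful about is that \eqref{ineq} is applied to the weight $h=|x|$ or $h=|x^i|$ raised to the fractional power $\epsilon$, so the integer-order identities from Plancherel combine cleanly with the interpolation step.
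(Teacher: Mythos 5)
Your proposal is correct and follows essentially the same route as the paper: the key inequality \eqref{ineq} is obtained by a direct application of H\"older's inequality with exponents $1/(1-\ep)$ and $1/\ep$ (the paper phrases this equivalently as monotonicity of $L^q$ norms for the normalized measure $\|f\|_s^{-s}|f|^s d\mu$), and the bounds \eqref{ineqf} then follow, as you show, by applying \eqref{ineq} to $|x^i|^l f$ resp.\ $|x|^l f$ and converting integer powers of $x$ into derivatives of $\wh{f}$ via Plancherel — exactly the "simple applications" the paper alludes to.
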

\begin{proof}
 By the H\"older inequality, if $\nu$ is a~finite, normalized measure on $X$, then for $q\geq1$ there is $\|F\|_{L^1(X,d\nu)}\leq \|F\|_{L^q(X,d\nu)}$. Take
\[
 d\nu(x)=\|f\|_s^{-s}|f(x)|^sd\mu(x)\,,\quad F=h^{s/q}\,.
\]
The inequality then reads:
\begin{equation*}
 \|f\|_s^{-s}\int|h(x)|^{s/q}|f(x)|^sd\mu(x)\leq\|f\|_s^{-s/q}\Big(\int|h(x)f(x)|^sd\mu(x)\Big)^{1/q}\,.
\end{equation*}
Setting here $q=1/\ep$ and taking the $1/s$-power of both sides we arrive at~\eqref{ineq}. The inequalities \eqref{ineqf} are simple applications.
\end{proof}

\section{A lemma on integrals}\label{integrals}

\begin{lem}\label{mumu}
 Let $\mu_i$, $i=1,2$, be Borel measures on $\mR$, such that
\begin{equation*}
 \mu_1((-\infty,a\>)\leq \mu_2((-\infty,a\>)\quad \text{for all}\quad a\in\mR\,.
\end{equation*}
If $f:\mR\mapsto\<0,+\infty\>$ is  non-increasing, then
\begin{equation*}
 \int_{\mR}fd\mu_1\leq \int_{\mR}fd\mu_2\,.
\end{equation*}
\end{lem}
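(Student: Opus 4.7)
The plan is to reduce the claim to a pointwise comparison of measures of level sets via the layer-cake representation, exploiting the fact that the super-level sets of a non-increasing $f$ are precisely left rays---exactly the sets constrained by the hypothesis.

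First, I would apply Tonelli's theorem to write, for $i=1,2$,
\begin{equation*}
\int_{\mR} f\, d\mu_i = \int_0^\infty \mu_i(A_t)\, dt\,,\qquad A_t:=\{x\in\mR:f(x)>t\}\,,
\end{equation*}
which is valid for any Borel-measurable $f:\mR\to\<0,+\infty\>$; a non-increasing $f$ is automatically Borel measurable since, as observed next, each $A_t$ is an interval. Indeed, for each $t\geq 0$ the set $A_t$ is a left ray, because $x\in A_t$ and $y\leq x$ force $f(y)\geq f(x)>t$. Hence $A_t$ is of the form $(-\infty,a_t)$ or $(-\infty,a_t]$ for some $a_t\in\<-\infty,+\infty\>$, including the degenerate cases $\emptyset$ and $\mR$.

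The hypothesis gives $\mu_1(A_t)\leq\mu_2(A_t)$ directly whenever $A_t=(-\infty,a_t]$. For the open case I would use continuity of measure from below along $(-\infty,a_t)=\bigcup_n(-\infty,a_t-1/n]$:
\begin{equation*}
\mu_1((-\infty,a_t))=\lim_{n\to\infty}\mu_1((-\infty,a_t-1/n])\leq\lim_{n\to\infty}\mu_2((-\infty,a_t-1/n])=\mu_2((-\infty,a_t))\,.
\end{equation*}
Integrating the pointwise inequality $\mu_1(A_t)\leq\mu_2(A_t)$ over $t\in(0,\infty)$ and invoking the layer-cake identity in the reverse direction yields the claim.

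There is no serious obstacle here; the only subtlety is the passage from the closed left rays appearing in the hypothesis to the possibly open left rays arising as level sets, which is handled routinely by monotone continuity of $\mu_i$. No finiteness assumption on $\mu_i$ or on $\int f\,d\mu_i$ is required, in keeping with the stated generality.
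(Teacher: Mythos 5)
Your proof is correct and follows essentially the same route as the paper: both arguments reduce the claim to comparing the measures of super-level sets of $f$, which are left rays because $f$ is non-increasing, and both extend the hypothesis from closed to open rays by continuity from below. The only difference is presentational: the paper approximates $f$ from below by dyadic step functions and invokes monotone convergence, while you package the same computation as the layer-cake identity $\int_{\mR} f\,d\mu_i=\int_0^\infty\mu_i(\{f>t\})\,dt$.
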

\begin{proof}
Let $\chi(\W)$ denote the characteristic function of the set $\W$. By the Lebesgue theorem  $\dsp\mu((-\infty,a))=\lim_{n\to\infty}\int\chi((-\infty,a-n^{-1}\>)\,d\mu$, so the inequality of measures also extends to $\mu_1((-\infty,a))\leq \mu_2((-\infty,a))$. We~define a~sequence of step functions
\begin{equation*}
 f_N=2^{-N}\sum_{k=1}^{2^{2N}}\chi\Big(f^{-1}\big(\<k2^{-N},+\infty\>\big)\Big)\,.
\end{equation*}
This formula may be rephrased in this way: if $f(x)\in\<k2^{-N},(k+1)2^{-N})$ for $k\in\{0,1,\ldots,2^{2N}-1\}$, then $f_N(x)=k2^{-N}$, and if $f(x)\geq2^N$, then \mbox{$f_N(x)=2^N$}. It follows that $f_N(x)\nearrow f(x)$ for all $x$. Therefore,
\begin{equation*}
 \int_{\mR}f\,d\mu_i=\lim_{N\to\infty}\int f_N\,d\mu_i=\lim_{N\to\infty}2^{-N}\sum_{k=1}^{2^{2N}}\mu_i\big(f^{-1}(\<k2^{-N},+\infty\>)\big)\,.
\end{equation*}
As $f$ is nonincreasing, each of the sets $f^{-1}(\<k2^{-N},+\infty\>)$ is either of the form $(-\infty,a)$ or $(-\infty,a\>$, and the inequalities of measures now give the result.
\end{proof}

\frenchspacing

\end{document}